\newtheorem{theorem}{Theorem}[section]
\newtheorem{lemma}{Lemma}[section]
\newtheorem{definition}{Definition}[section]
\newcommand{\eps}{\varepsilon}
\newcommand{\floor}[1]{\lfloor #1 \rfloor}
\newcommand{\ceil}[1]{\lceil #1 \rceil}
\newcommand{\N}{\mathbb{N}}
\newcommand{\F}{\mathbb{F}}
\newcommand{\hash}{\mathtt{hash}}
\newcommand{\hashRo}{\ensuremath{\mathtt{hash^o_{R}}}}
\newcommand{\C}{\mathcal{C}}
\newcommand{\exclude}[1]{}
\newcommand{\noSTOC}[1]{}
\newcommand{\includingdet}[1]{#1}
\newcommand{\FullOrShort}{full}
  \newcommand{\fullOnly}[1]{#1}
  \newcommand{\shortOnly}[1]{}
  \newcommand{\algorithmsize}{\normalsize}
    \newcommand{\fullOnly}[1]{}
		\newcommand{\shortOnly}[1]{#1}
    \newcommand{\IncludePictures}[1]{}
   \newcommand{\algorithmsize}{\footnotesize}
\begin{document}
%
\title{Optimal Document Exchange\includingdet{\\ \large and New Codes for Insertions and Deletions}}



\author{\IEEEauthorblockN{Bernhard Haeupler\thanks{Supported in part by NSF grants CCF-1618280, CCF-1814603, CCF-1527110, NSF CAREER award CCF-1750808, and a Sloan Research Fellowship.}}
\IEEEauthorblockA{Computer Science Department\\
Carnegie Mellon University\\
Pittsburgh, USA\\
\texttt{haeupler@cs.cmu.edu}}
}


%


\maketitle

\begin{abstract}
We give the first communication-optimal document exchange protocol. For any $n$ and $k < n$ our randomized scheme takes any $n$-bit file $F$ and computes a $\Theta(k \log \frac{n}{k})$-bit summary from which one can reconstruct $F$, with high probability, given a related file $F'$ with edit distance $ED(F,F') \leq k$.


The size of our summary is information-theoretically order optimal for all values of $k$, giving a randomized solution to a longstanding open question of \cite[Orlitsky; FOCS'91]{orlitsky1991interactive}. It also is the first non-trivial solution for the interesting setting where a small constant fraction of symbols have been edited, producing an optimal summary of size $O(H(\delta)n)$ for $k=\delta n$. This concludes a long series of better-and-better protocols which produce larger summaries for sub-linear values of $k$ and sub-polynomial failure probabilities. In particular, the recent break-through of \cite[Belazzougui, Zhang; FOCS'16]{belazzougui2016edit} assumes that $k < n^\epsilon$, produces a summary of size $O(k\log^2 k + k\log n)$, and succeeds with probability $1-(k \log n)^{-O(1)}$. %
\includingdet{

We also give an efficient derandomized document exchange protocol with summary size $O(k \log^2 \frac{n}{k})$. This\footnote{The same derandomization result was simultaneously and independently discovered by \cite[Cheng, Jin, Li and Wu; FOCS'18]{docexchangefocs}. 
Both works were put on arxiv days apart~\cite{docexchangearxivhaeupler,docexchangearxivcheng}. However, the author served on the program committee of FOCS'18 and was as such not permitted to submit his work there.}~improves, for any $k$, over a deterministic document exchange protocol by Belazzougui~\cite{belazzougui2015efficient} with summary size $O(k^2 + k \log^2 n)$. Our deterministic document exchange directly provides new efficient systematic error correcting codes for insertions and deletions. These (binary) codes correct any $\delta$ fraction of adversarial insertions/deletions while having a rate of $1 - O(\delta \log^2 \frac{1}{\delta})$ and improve over the codes of Guruswami and Li and Haeupler, Shahrasbi and Vitercik which have rate $1 - \Theta\left(\sqrt{\delta} \log^{O(1)} \frac{1}{\epsilon}\right)$. 
}

\end{abstract}

\begin{IEEEkeywords}
document exchange, insertions and deletions, error correcting codes, edit distance
\end{IEEEkeywords}

%
\IEEEpeerreviewmaketitle

\section{Introduction}\label{sec:intro}

This paper gives the first efficient solution for the document exchange problem with an order optimal summary size. Our efficient randomized hashing scheme takes any $n$-bit file $F$ and for any $k$ computes an optimal sized $O(k \log \frac{n}{k})$-bit summary from which one can reconstruct $F$ given a related file $F'$ with edit distance\footnote{The edit distance $ED(F,F')$ between two strings $F$ and $F'$ is the minimal number of insertions, deletions, or symbol changes that transform one string into the other.} $ED(F,F') \leq k$. We also give a near optimal derandomization which deterministically computes an $O(k \log^2 \frac{n}{k})$ bit summary. This leads to improved systematic binary error correcting codes which efficiently correct any $\delta$ fraction of adversarial insertions and deletions while achieving a near optimal rate of $1 - O(\delta \log^2 \frac{1}{\delta})$.


Document exchange, or remote data synchronization, is an important problem in practice that frequently occurs when synchronizing files across computer systems or maintaining replicated data collections over a bandwidth limited network. In the simplest version it consists of two machines that each hold a copy of an $n$-bit file $F'$ where on one machine this file may have been updated to $F$. When updating the data on the other machine one would ideally like to only send information about their differences instead of sending the whole file $F$. This is particularly important because network bandwidth and data transfer times are limiting factors in most applications and $F$ often differs little from $F'$, e.g., only a small number $k$ of changes have been applied or a small fraction of the content has been edited, i.e., $k=\delta n$ for some small constant $\delta \in (0,1)$. This ``scenario arises in a number of applications, such as synchronization of user files between different machines, distributed file systems, remote backups, mirroring of large web and ftp sites, content distribution networks, or web access [over a slow network]''~\cite{irmak2005improved}.


One can imagine a multi-round protocol in which the two machines adaptively figure out which parts of the outdated file have not been changed and need not be transmitted. However, multi-round protocols are too costly and not possible in many settings. They incur long network round-trip times and if multiple machines need updating then a separate synchronization protocol needs to be run for each individual such machine. Surprisingly, Orlitsky~\cite{orlitsky1991interactive}, who initiated the theoretical study of this problem in 1991, proved that the party knowing $F$ can compute a short summary of $\Theta(k \log \frac{n}{k})$ bits which can then be used by \emph{any} other party $i$ knowing a file $F'_i$, which differs from $F$ by $k$ potentially very different edits to recover $F$ and the $k$ edits that have been applied to obtain $F$ from $F'_i$. This is initially quite surprising especially because the summary is, up to constants, of equal size as a description of the unknown changes themselves. Indeed, an exchange of $\Omega(k \log \frac{n}{k})$ bits is information-theoretically necessary to describe the difference between two $n$-bit strings of edit distance $k$. Unfortunately however, Orlitsky's result is merely existential and requires exponential time computations for recovery, which prevents the result to be of practical use. Pointing out several powerful potential applications, Orlitsky left the question of an efficient single-round document exchange protocol that matches the non-constructive $O(k \log \frac{n}{k})$ summary size as an open question which has since inspired many theoretical and practical results working towards this goal. This paper solely focuses on such single-round document exchange schemes. For simplicity, like all other prior works, we assume that a good upper bound $k$ on the edit distance $ED(F,F')$ is known\footnote{Alternatively starting with $k=1$ and doubling $k$ until the recovery was successful leads to the same amount of communication, up to a factor of two, since the summary size only depends linearly on $k$.}.



In practice rsynch~\cite{tridgell1996rsync} has become a widely used tool to achieve efficient two-round document exchange / file synchronization while minimizing the amount of information sent. Rsynch is also used as a routine in the rdiff tool to efficiently compute changes between files, e.g., in version control systems. Many similar protocols have been suggested and implemented. Unfortunately rsynch and almost all other tools do not have any guarantees on the size of the data communicated and it is easy to give examples on which these algorithms perform extremely poorly. A notable exception is a scheme of 
Irmak, Mihaylov and Suel~\cite{irmak2005improved}.


On the theoretical side the protocol of \cite{irmak2005improved} was the first computationally efficient single-round document exchange scheme with a provable guarantee on the size of a summary in terms of the edit distance $k = ED(F,F')$ achieving a size of $O(k \log \frac{n}{k} \log n)$. Independently developed fuzzy extractors~\cite{dodis2008fuzzy} can also be seen as providing a document exchange scheme for some $k$ polynomially small in $n$. A randomized scheme by Jowhari~\cite{jowhari2012efficient} independently achieved  a size of $O(k \log n \log^* n)$. In two recent break-throughs Chakraborty, Goldenberg, and Kouck\`y~\cite{chakraborty2016streaming} designed a low distortion embedding from edit distance to hamming distance which can be used to get a summary of size $\Theta(k^2 \log n)$ and Bellazougi and Zhang~\cite{belazzougui2016edit} further build on this randomized embedding and achieved a scheme with summary size $\Theta(k \log^2 k  + k \log n)$ which is order optimal for\footnote{We write $\exp(x)$ as a shortcut for $\Theta(1)^x = 2^{\Theta(x)}$.} $k = \exp(\sqrt{\log n})$. All of these schemes are randomized. The first deterministic scheme with summary size $\Theta(k^2 + k \log^2 n)$ was given by Belazzougui~\cite{belazzougui2015efficient}. All these document exchange schemes have some sub-linear restriction on the maximal magnitude of $k$. For example, the breakthrough result of \cite{belazzougui2016edit} assumes that $k < n^\eps$ for some sufficiently small constant $\eps >0$. In particular, there does not exist a scheme which works for the interesting case where the edit distance $k$ is a small constant fraction $\delta$ of the file length, e.g., if $1\%$ of the content has been edited. 


Deterministic document exchange solutions are furthermore related to error correcting codes for insertions and deletions, another  topic that has been studied quite intensely recently~\cite{schulman1999asymptotically,guruswami2015deletion,GL-isit16,HaeuplerSTOC17p46,haeupler2017synchronization2,haeupler2017synchronization3,brakensiek2016efficient,bukh2016improved} (see also these surveys~\cite{sloane2002single,mitzenmacher2009survey}). In particular, as we will see later, any single-round deterministic document exchange scheme with summary size $s = \Theta(|S_F|)$ for edit distance $\Theta(k)$ is equivalent\footnote{This equivalency does not hold for randomized document exchange schemes or for non-systematic error correcting codes for insertions and deletions.} to an systematic error correcting block code with redundancy $\Theta(s)$ which can correct up to $\Theta(k)$ errors. Through this equivalence one can derive a systematic insdel code with redundancy $O(k^2 + k \log^2 n)$ from the deterministic document exchange of Belazzougui~\cite{belazzougui2016edit}. A non-systematic code  with redundancy $O(k^2 \log k \log n)$ was given by by Brakensiek, Guruswami and Zbarsky~\cite{brakensiek2016efficient} but this code is only efficient for constant $k$. Most other works on error correcting insdel codes~\cite{schulman1999asymptotically,guruswami2015deletion,GL-isit16,HaeuplerSTOC17p46,haeupler2017synchronization2,haeupler2017synchronization3} have focused on the setting where a constant fraction of symbols have been corrupted, i.e., $k = \delta n$. For large finite alphabets Haeupler and Shahrasbi~\cite{HaeuplerSTOC17p46} gave insdel codes with optimal redundancy (up to a $(1+\eps)$ factor) and for binary codes Guruswami and Li~\cite{GL-isit16} and Haeupler, Shahrasbi and Vitercik~\cite{haeupler2017synchronization2} gave efficient codes for insertions and deletions with redundancy $\Theta\left(\sqrt{\delta} \log^{O(1)} \frac{1}{\delta} \cdot n\right)$ for any sufficiently small constant $\delta$.

\section{Our Results}

We positively answer the 28 year old open question of Orlitsky~\cite{orlitsky1991interactive} asking for an efficient document exchange scheme matching the optimal summary size of the existential results, at least for the randomized case, and give an efficient randomized hashing scheme for the single-round document exchange problem which, for any $k$, produces a summary of order optimal size $\Theta(k \log \frac{n}{k})$.

\begin{theorem}\label{thm:mainrandomized}
For any $k>0$ there is a randomized algorithm, which given any $n$-bit string $F$ produces a $\Theta(k \log \frac{n}{k})$-bit summary $S_F$. There also is a deterministic recovery algorithm, which given $S_F$ and any string $F'$ that is independent from the randomness used for computing $S_F$ and satisfies $ED(F,F')\leq k$, recovers the string $F$ with high probability, i.e., with probability $1 - n^{-O(1)}$. 
\end{theorem}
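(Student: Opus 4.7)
The plan is to use a recursive hashing scheme on a balanced binary tree of substrings of $F$: the root is $F$ itself and each level halves the block size until we reach $L = \log \frac{n}{k}$ levels, whose leaves have length $n/k$. For every tree node the sender transmits a short hash of the underlying substring. The receiver, knowing $F'$, tries to locate each node as a substring within a candidate window in $F'$ determined by the already-recovered position of the node's parent. When the hash matches a candidate, recursion stops; otherwise it descends into the two children. Since $ED(F,F')\leq k$, only $O(k)$ leaves are ``bad'', and summing over the $L$ levels gives $O(kL)=O(k\log\frac{n}{k})$ touched nodes in total, so it suffices to spend $O(1)$ amortized hash bits per touched node plus a cheap leaf-recovery step.

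The main obstacle is that, unlike Hamming distance, edits shift block boundaries: a node's substring in $F$ corresponds to a window in $F'$ whose endpoints are unknown. Sending enough hash bits to distinguish among all $\Theta(n)$ offsets would cost $\Omega(\log n)$ bits per node and only reproduce the Irmak--Mihaylov--Suel bound of $O(k\log \frac{n}{k}\cdot \log n)$. I would instead attach to each visited node a ``local edit budget'' $k_v$ tracking how many edits fall into its subtree, so that the candidate window in $F'$ has width $O(k_v)$ and only $O(k_v)$ candidate alignments need to be tested. A hash of $O(\log k_v)$ bits, implemented as polynomial (Reed--Solomon) hashing over $\Fq$ with a short shared random seed, then suffices to identify the right alignment after a union bound. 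Because every edit can be charged to a single root-to-leaf path, the budgets satisfy $\sum_v k_v = O(k)$ at each level, and $\sum_v \log k_v$ aggregates by concavity to $O(k\log \frac{n}{k})$ bits overall.

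At the bottom level each bad leaf has length $n/k$ and must be reconstructed exactly; rather than continuing the recursion (which would reintroduce a $\log$ factor) I would attach a correction produced by a deterministic sub-protocol on the small leaf-sized instance, for example Belazzougui's $O(k_v^2 + k_v\log^2\frac{n}{k})$ scheme applied locally, which again sums to $O(k\log \frac{n}{k})$ using $\sum k_v = O(k)$. The hardest step will be proving a single high-probability statement: that over the random seed \emph{no} spurious hash match occurs among the polynomially many candidate alignments across all levels and nodes, despite the decoder acting adaptively on the hashes it observes. I expect this to force an extra additive $O(\log\log n)$ slack in the per-node hash length and to require an auxiliary argument for nodes with very small $k_v$, e.g.\ by replacing hash-based matching at such nodes with a direct deterministic sub-protocol on a constant number of $O(\log n)$-sized windows, so that the total summary size still collapses to the optimal $\Theta(k\log \frac{n}{k})$.
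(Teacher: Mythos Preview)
Your proposal has a genuine structural gap: the entire scheme is driven by the quantities $k_v$, the ``local edit budget'' of each tree node, but these numbers depend on $F'$ and are therefore unknown to the sender at the time $S_F$ must be produced. In a single-round protocol the sender commits to the summary before any interaction, so it cannot allocate $O(\log k_v)$ hash bits to node $v$, nor can it run Belazzougui's protocol at a leaf with parameter $k_v$. Your description reads as an interactive top-down recursion in which the receiver reports back which nodes failed and the sender then refines; that is precisely the multi-round setting Orlitsky's question asks us to avoid. If instead you mean that the sender hashes every node of the tree with a fixed budget, then either the budget is $\Theta(\log n)$ per node (reproducing the $O(k\log\frac{n}{k}\log n)$ bound you wanted to beat) or it is $O(1)$ per node, in which case constant-probability hash collisions must be tolerated and you are back to needing the machinery the paper develops.

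A second, independent problem is the leaf step. Applying Belazzougui's $O(k_v^2 + k_v\log^2\frac{n}{k})$ bound to each bad leaf and summing with $\sum_v k_v = O(k)$ does \emph{not} give $O(k\log\frac{n}{k})$: the linear term alone already contributes $\Theta(k\log^2\frac{n}{k})$, and the quadratic term $\sum_v k_v^2$ can be $\Theta(k^2)$ when the edits cluster in a single leaf. (There is also an arithmetic slip in the tree: halving from length $n$ down to length $n/k$ takes $\log k$ levels, not $\log\frac{n}{k}$.)

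For comparison, the paper's route sidesteps the ``sender does not know $k_v$'' obstacle entirely. It keeps the Irmak--Mihaylov--Suel level structure ($\Theta(k\cdot 2^\ell)$ blocks at level $\ell$, for $\ell=0,\dots,\log\frac{n}{k}$) but shrinks every hash to a \emph{constant} number of bits. The resulting constant collision probability is absorbed by a much more robust receiver: it maintains a $k$-plausible matching across levels, proves that at every level at most $O(k)$ matches are bad or missing, and then---crucially---replaces the per-level error-correcting-code trick (which would cost $\Theta(k\log\frac{n}{k})$ bits per level) by an enumeration over $\exp(O(k))$ structured ``$t$-witnesses'' for the positions of bad hashes, verified against a single short hash of the whole level. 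None of this requires the sender to know anything about where the edits are.
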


This improves over the recent break-through of \cite{belazzougui2016edit} which produces a summary of size $\Theta(k\log^2 k + k\log n)$ and works as long as $k < n^\eps$. We remark that the scheme in \cite{belazzougui2016edit} has a failure probability which is polylogarithmic in $n$ and polynomial in $k$ whereas the scheme in \Cref{thm:mainrandomized} works with high probability. If one wants to boost the scheme in \cite{belazzougui2016edit} to work with high probability one needs to send $\frac{\log n}{\log k + \log \log n}$ independent summaries making the overall summary size (up to a $\log \log n$ for sub-logarithmic $k$) equal to $\Theta(k \log k \log n + k \frac{\log^2 n}{\log k})=\Omega(k \log^{1.5} n)$.

As a precursor to our main result we obtain a document exchange protocol with sub-optimal summary size that has the advantage that it can be efficiently derandomized\footnote{The same derandomization result was simultaneously and independently discovered by \cite[Cheng, Jin, Li and Wu; FOCS'18]{docexchangefocs}. 
Both works were put on arxiv days apart~\cite{docexchangearxivhaeupler,docexchangearxivcheng}. However, the author served on the program committee of FOCS'18 and was as such not permitted to submit his work there.}. This gives a deterministic document exchange protocol with summary size $\Theta(k \log^2 \frac{n}{k})$, improving over the deterministic scheme by Belazzougui\cite{belazzougui2015efficient} with summary size $\Theta(k^2 + k \log n)$.


\begin{theorem}\label{thm:maindet}
There is a deterministic document exchange algorithm, which given any any $k>0$ and any $n$-bit string $F$ produces a $\Theta(k \log^2 \frac{n}{k})$-bit summary $S_F$, such that a deterministic recovery algorithm, which is given $S_F$ and any string $F'$ with edit distance $ED(F,F') \leq k$, recovers $F$. 
\end{theorem}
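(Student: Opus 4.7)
My plan is to first construct a randomized document exchange protocol whose summary size matches the target $O(k\log^2 \frac{n}{k})$ and whose randomness is structured enough to be eliminated, and then derandomize it in polynomial time by an explicit search over a small family of hash seeds.

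For the randomized precursor I would use a hierarchical hash-tree reconciliation. Partition $F$ into blocks at $L = \Theta(\log \frac{n}{k})$ nested granularities, with block sizes $n, n/2, n/4, \ldots$ down to $\Theta(k)$. At each level $\ell$ compute a short $\Theta(\log \frac{n}{k})$-bit hash of every block using a universal family. Since $ED(F,F')\le k$, at each level only $O(k)$ blocks of $F$ fail to appear as a contiguous substring of $F'$, so it suffices to transmit $O(k)$ level-$\ell$ hashes together with their positions, which can be packed into $O(k \log \frac{n}{k})$ bits per level, e.g.\ by sending the syndromes of a Reed--Solomon code over the level's hash sequence. Summed over $L$ levels this gives $O(k \log^2 \frac{n}{k})$ bits. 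The receiver then proceeds coarse to fine, using each level's syndrome together with its own file $F'$ to pin down the $O(k)$ mismatched blocks at that level and narrow the locations of corruptions as the block size halves. At the bottom level the remaining $\Theta(k)$ mismatched blocks of size $\Theta(k)$ can be reconciled with an existing deterministic insdel code (e.g.\ Belazzougui's), which fits in $O(k \log^2 k) = O(k\log^2 \frac{n}{k})$ additional bits.

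For the derandomization, the randomness lives entirely in the choice of hash functions across the $L$ levels. Using polynomial hashing over a field of size $\poly(n)$ reduces each level's seed to $O(\log n)$ bits, so a single seed of length $O(\log n)$ determines all $L$ hash functions. For a fixed $F$, the sender can verify in polynomial time whether a given seed avoids \emph{all} hash collisions within the relevant polynomial-sized set of $F$-substrings: at level $\ell$ there are only $O(n)$ length-$b_\ell$ contiguous substrings of $F$ to worry about. The total number of pairs is $O(n^2 L)$ and each collides with probability $\frac{1}{\poly(n)}$, so a union bound guarantees that most seeds are collision-free; iterating through the $\poly(n)$ seeds finds a good one explicitly and we append it to the summary at cost $O(\log n)$.

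The main obstacle will be arguing that collision-freeness on the polynomial set of $F$-substrings already suffices to guarantee correct recovery against \emph{every} $F'$ at edit distance $\le k$, even though the recovery algorithm naturally hashes substrings of $F'$ that the sender cannot enumerate. The plan to handle this is to structure the recovery algorithm so that each candidate reconciled block is only accepted when an additional check ties it back to a substring of $F$ — for instance through cross-level consistency in the hash tree, or by adding a small redundant inner fingerprint that any accepted block must agree with. Making this structural modification rigorous, while preserving both the $O(k\log^2 \frac{n}{k})$ total size and polynomial encoding and decoding time, is where the real technical work lies.
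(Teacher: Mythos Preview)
Your high-level architecture --- hierarchical blocks at $\Theta(\log\frac{n}{k})$ levels, $\Theta(\log\frac{n}{k})$-bit hashes per block, Reed--Solomon-style redundancy of $O(k)$ hashes per level, and derandomization by exhaustive search over a small seed space --- is exactly the paper's. You have also correctly located the real obstacle: the sender can only inspect substrings of $F$, yet the receiver hashes substrings of $F'$.

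The gap is in your proposed derandomization criterion. You ask the seed to avoid \emph{all} collisions among the $\Theta(n^2 L)$ pairs of equal-length $F$-substrings, and you claim each pair collides with probability $1/\poly(n)$. But that collision probability requires $\Theta(\log n)$-bit hashes, not the $\Theta(\log\frac{n}{k})$-bit hashes you need for the $O(k\log^2\frac{n}{k})$ budget; with $\Theta(\log\frac{n}{k})$-bit outputs the per-pair collision probability is $(k/n)^{\Theta(1)}$, so among $\Theta(n^2)$ pairs you \emph{expect} $\Theta(nk)$ collisions and no seed is collision-free. Your two stated parameters are therefore incompatible, and the suggested patch (extra cross-level checks tying accepted blocks back to $F$) does not resolve this tension.

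The paper's fix is to weaken the criterion: one does not need zero collisions, only the absence of a \emph{$k$-bad monotone disjoint self-matching} of $F$ --- i.e.\ no way to pair $k$ block-aligned substrings of $F$ with $k$ non-identical substrings of $F$ that all collide simultaneously. This is (i) still sufficient for correctness, because any bad match the receiver makes to an edit-free region of $F'$ is literally a collision between two $F$-substrings, so the number of bad matches is at most $ED(F,F')$ plus the size of the largest bad self-matching; (ii) checkable in polynomial time by dynamic programming; and (iii) achievable with $\Theta(\log\frac{n}{k})$-bit hashes. The last point needs one more idea: a $k$-bad self-matching forces a \emph{short} $k'$-bad self-matching with $k' = \Theta\bigl(\frac{\log n}{\log(n/k)}\bigr)$ confined to an interval of length $O(\frac{k'}{k}n)$, and there are only $\poly(n)$ such short candidates, so the union bound closes with $n^{-c}$-biased randomness and a $\poly(n)$-size seed space.
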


The schemes from \Cref{thm:mainrandomized} and \Cref{thm:maindet} are the first document exchange protocols which work for the interesting setting in which a constant fraction of edits need to be communicated. In particular, if the edit distance between $F$ and $F'$ is $\delta n$ for some small constant $\delta > 0$ then our optimal randomized scheme produces a summary of size $\Theta(n \delta \log 1/\delta) = \Theta(n H(\delta)) \ll n$, where $H(.)$ is the binary entropy function. Our deterministic scheme incurs another $\log \frac{1}{\delta}$ factor but the summary size of $\Theta(n \delta \log^2 \frac{1}{\delta})$ is still much smaller than $n$ for sufficiently small $\delta$. 


As mentioned above, efficient deterministic document exchange protocols are known to be equivalent to efficient, systematic error correcting codes for insertions and deletions. Via this equivalency the deterministic document exchange scheme from \Cref{thm:maindet} directly gives the following near optimal, efficient, systematic error correcting codes for insertions and deletions which work for any $k$ adversarial insertions and deletions with near optimal redundancy $\Theta(k \frac{\log^2 \frac{n}{k}}{\log q} + k)$. 

\begin{theorem}\label{thm:maincodes}
For any $n$, any $k < n$, and any finite alphabet $\Sigma$ of size $q = |\Sigma| \geq 2$ there is a simple deterministic encoding algorithm which takes an input string $X \in \Sigma^n$ and outputs a systematic codeword $C(X) \in \Sigma^{n+r}$ consisting of $X$ itself and $r=\Theta(k \frac{\log^2 \frac{n}{k}}{\log q} + k)$ redundant extra symbols and a deterministic decoding algorithm $A_{dec}$ such that for any $X$ and any $C'$ with $ED(C',C(X)) \leq k$ we have $A_{dec}(C')=X$, i.e., one can recover $X$ from any corrupted codeword $C'$ which is $k$-close to $C(X)$ in edit distance. 
\end{theorem}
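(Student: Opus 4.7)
The plan is to apply the standard reduction from a deterministic document exchange to a systematic insdel code. Given $X \in \Sigma^n$, I would first use (the natural $q$-ary extension of) the protocol from \Cref{thm:maindet} with edit-distance threshold $4k$ to compute a deterministic summary $S_X$ occupying $\Theta(k \log^2(n/k)/\log q + k)$ symbols of $\Sigma$, and then protect $S_X$ against $2k$ insertions and deletions with an efficient $q$-ary insdel code (e.g., a binary code of~\cite{GL-isit16,haeupler2017synchronization2} applied symbol-wise, or a recursive use of the same document exchange on $S_X$ until the summary shrinks to $O(k)$ symbols and can be wrapped in a constant-rate insdel code). Writing $R$ for the outer-encoded summary, the systematic codeword is $C(X) := X \circ R$; the outer overhead is absorbed by the additive $+k$ when $S_X$ is already $\Theta(k)$ symbols and is otherwise a lower-order term, so $|R| = \Theta(k \log^2(n/k)/\log q + k)$ as required.

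For decoding $C' \in \Sigma^*$ with $ED(C', C(X)) \leq k$, the algorithm $A_{dec}$ enumerates every split position $j$ of $C'$: at each split it runs the outer insdel decoder on the suffix $C'_{>j}$ to obtain a candidate summary $\hat{S}$, invokes the \Cref{thm:maindet} recovery algorithm on $(\hat{S}, C'_{\leq j})$ to obtain a candidate $\hat{X}$, and accepts iff $ED(C(\hat{X}), C') \leq k$. Existence of a successful split follows from the standard observation that any edit-distance alignment of $C(X)$ with $C'$ splits $C'$ at the image of the $X$-$R$ boundary into two sub-alignments whose costs sum to at most $k$; on that split the outer decoder sees at most $k \leq 2k$ insdels and recovers $S_X$, after which the \Cref{thm:maindet} recovery returns the true $X$ since $ED(X, C'_{\leq j}) \leq k \leq 4k$.

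The main difficulty will be proving uniqueness of the accepted candidate. Suppose $X_1 \neq X_2$ both verify; then $ED(C(X_1), C(X_2)) \leq 2k$ by the triangle inequality. Taking an optimal alignment between $C(X_1)$ and $C(X_2)$ and splitting it at the image of the $X_1$-$R_1$ boundary, the length bound $ED(A,B) \geq \bigl| |A| - |B| \bigr|$ forces the corresponding position $j$ in $C(X_2)$ to satisfy $|j - n| \leq 2k$, which after a short calculation yields $ED(X_1, X_2) \leq 4k$ and $ED(R_1, R_2) \leq 4k$. Because the outer code corrects $2k$ insdels its minimum distance exceeds $4k$, so $R_1 = R_2$ and therefore $S_{X_1} = S_{X_2}$. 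But then the deterministic \Cref{thm:maindet} recovery on the common input $(S_{X_2}, X_2)$ must simultaneously return $X_1$ (by its guarantee on the instance $(X_1, X_2)$ with edit distance at most $4k$) and $X_2$ (by its guarantee on the trivial instance $(X_2, X_2)$), a contradiction. Hence $A_{dec}$ is unambiguous, completing the proof.
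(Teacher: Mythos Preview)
Your construction and correctness argument are sound, but the paper's decoding is considerably simpler and avoids both the split enumeration and the uniqueness step. Rather than trying every split position of $C'$, the paper observes that a \emph{fixed} split already works: set $X' := C'[1,n]$ and let $E'$ be the last $r$ symbols of $C'$. Because the redundancy is padded to at least $\Theta(k)$ symbols, $|C'| \geq n+r-k \geq n$, so these are well-defined. An optimal alignment of $C(X)$ with $C'$ maps $X$ onto some prefix $C'[1,j]$ with cost at most $k$ and $|j-n|\leq k$; hence $ED(X,X') \leq ED(X,C'[1,j]) + |j-n| \leq 2k$, and symmetrically $ED(E(S_X),E') \leq 2k$. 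One run of the outer insdel decoder on $E'$ therefore recovers $S_X$, and one run of the document-exchange recovery on $(S_X,X')$ recovers $X$. No acceptance test and no uniqueness proof are needed, and the document-exchange threshold $2k$ (rather than your $4k$) suffices.

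Your enumeration-and-verify approach does work and has the incidental virtue of not requiring the decoder to know $n$ in advance, but here $n$ is fixed by the code, so the fixed-split trick is both shorter and conceptually cleaner; your uniqueness argument (triangle inequality plus minimum-distance of the outer code plus determinism of the recovery map) is correct but unnecessary for the theorem as stated.
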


This is an almost quadratic improvement in terms of redundancy and rate loss compared to the state-of-the-art binary insdel codes of Brakensiek, Guruswami and Zbarsky~\cite{brakensiek2016efficient} and Belazzougui~\cite{belazzougui2016edit} for small values of $k$ and the codes of Guruswami et al.~\cite{guruswami2015deletion,GL-isit16} and Haeupler, Shahrasbi and Vitercik~\cite{haeupler2017synchronization2,HaeuplerSTOC17p46} for the case of a constant fraction of corruptions. A more detailed comparison is given in \Cref{sec:derandomization}.

\section{Hash Functions and Summary Structure}

In this section we describe and define the simple inner-product hash functions used in our schemes and the content and structure of the summary $S_F$ of $F$. We start by giving some intuition about the summary structure in \Cref{sec:summaryinformal}, give our string notation in \Cref{sec:stringnotation}, formally define our hash function in \Cref{sec:defhashfunctions}, and define our summary structure in \Cref{sec:summary}.

\subsection{Intuition for the Summary Structure}\label{sec:summaryinformal}

Essentially all document exchange algorithms used in practice, including rsynch, use the very natural idea of cutting the file $F$ into blocks and sending across hashes of these blocks in order to identify which of these blocks are contained in $F'$ without any edits. 

Once identical blocks have been identified the remaining information containing all differences between $F$ and $F'$ is small and can be transmitted. The protocol of Irmak, Mihaylov and Suel~\cite{irmak2005improved} also follows this strategy. However, it does not use a fixed block length but uses $\log  \frac{n}{k}$ levels\footnote{All logarithms in this paper are with respect to basis two unless stated otherwise.} of exponentially decreasing block length. This allows to progressively zoom into larger blocks containing some edits to identify smaller blocks within them that do not contain an edit. Given that higher levels should already identify large parts of the string $F$ that are identical to $F'$ and thus known to the party reconstructing $F$ many of the hashes of lower levels will not be of interest to the party reconstructing $F'$. To avoid having to send these hashes one could run an adaptive multi-level protocol in which the reconstructing party provides feedback at each level. A great and much simpler single-round alternative introduced by~\cite{irmak2005improved} is to use (the non-systematic part of) systematic error correcting codes which allows the receiving party to efficiently reconstruct the hashes it is missing, without the sending party needing to know which hashes these are. The summary $S_F$ now simply consists of these encodings of hashes of all $\log \frac{n}{k}$ levels and this summary can be sent to the reconstructing party in a single-round document exchange protocol. Given that at most $ED(F,F')$ blocks can be corrupted in each level the summary size is $\Theta(k \log \frac{n}{k} \cdot o)$ where $o$ is the size of a single hash. Using randomized $o=\Theta(\log n)$-bit hashes with $\Theta(\log n)$-bit seeds, which are guaranteed to be correct with high probability, leads to the $\Theta(k \log \frac{n}{k} \cdot \log n )$ bit summary size of \cite{irmak2005improved}.
 

The hashing schemes in this paper mostly follow the same practical framework. In fact, the summary structure we use for our simpler (sub-optimal deterministic) document exchange protocol with summary size $\Theta(k \log^2 \frac{n}{k})$ is identical to \cite{irmak2005improved} except that we use a smaller hash size $o = \Theta(\log \frac{n}{k})$ and a  compact way to describe the randomness used for hashing, which can then also be used for derandomization. Our main result further reduces the hash size $o$ to merely a fixed constant. This requires an 
much more robust recovery algorithm which can deal with high hash collision probabilities. In particular, since the failure probability of a hash is exponential in the hash size $o$, choosing $o=\log n$ as in \cite{irmak2005improved} implies that no hash collision happens with high probability and choosing $o=\log \frac{n}{k}$ still keep the expected number of hash collisions at $O(k)$, that is, at the same order of magnitude as the errors one has to deal with anyway. The fact that recovery from constant size hashes with a constant failure probability is even existentially possible requires a much more intricate probabilistic analysis. Furthermore, the key trick used in \cite{irmak2005improved,belazzougui2015efficient,docexchangearxivcheng} to use error correcting codes to obliviously communicate the missing or incorrect hashes in each round inherently requires $\Theta(k \log \frac{n}{k})$-bits to be sent in each of the $\log \frac{n}{k}$ levels. This forms another serious barrier that needs to be overcome for our main result. 

\subsection{String Notation}\label{sec:stringnotation}

Next we briefly give the string notation we use throughout.

Let $S \in \Sigma^*$ be a string. We denote with $|S|$ the length of $S$ and for any $i,j \in [1,|S|]$ with $i \leq j$ we denote with $S[i,j]$ the substring of $S$ between the $i^{th}$ and $j^{th}$ symbol, both included. A sub-string of $S$ is always a set of consecutive symbols in $S$, i.e., any string of the form $S[i,j]$. We also use multi-dimensional arrays of symbols, in which every index typically begins with $0$. An array position $H[i,j]$ with $i,j \in \mathbf{N}_0$ can either contain a symbol over some alphabet $\Sigma'$ or be empty. We denote with $H[i,.]$ the string of symbols $(H[i,1], H[i,2], \ldots)$ containing all symbols of the form $H[i,j]$.

\subsection{Inner-Product Hash Function $\hashRo$ and Randomness Table $R$}\label{sec:defhashfunctions}

Next we describe our hash function $\hash$, which computes nothing more than some $\F_2$-inner-products between its input string and some random bits. 

To properly keep track of the randomness used we will think of the randomness being supplied by a three dimensional table of bits we call $R$. We remark that our algorithms do not actually need to instantiate or compute $R$ explicitly. Instead the description of the bits contained in $R$ will be so simple that they can be generated/computed on the spot whenever needed. 

In addition to the string $S \in \Sigma^*$ to be hashed we supply four more arguments to $\hash$. A call like $\hashRo(S,s,\ell)$ produces a hash of the string $S \in \Sigma^*$ using the randomness table $R$. The parameter $o \in \N$ denotes the size of the hash, i.e., the number of bits that are produced as an output. The parameters $s$ and $\ell$ denote to which starting position and level the hash belongs to, respectively. These parameters are used to describe where in the randomness table $R$ to pull the randomness for the inner product from. This ensures firstly that hashes for different levels and intervals use different or ``fresh'' random bits and secondly that summary creation and recovery consistently use the same parts of $R$ when testing whether two strings stem from the same interval in the original string $F$. The inner product computed by $\hashRo(S,s,\ell)$ is now simply the $o$-bit string $h_1,\ldots,h_o$ for which $h_i = \bigoplus_{j=1}^{|S|} \left(S[j] \cdot R[s+j-1,\ell,i]\right)$.

Note that if $R$ is filled with independent and uniformly distributed bits we have that any two non-identical strings have colliding hashes with probability $2^{-o}$, i.e., for every $k \in N$, $S \neq S' \in \Sigma^k$ and $o,s,\ell \in \N$ it holds that
$$P_R[\hashRo(S,s,\ell) = \hashRo(S',s,\ell)] = 2^{-o}.$$
 The reason for this is that each of the $o$ output bits independently is an inner product between the string to be hashed and the same uniformly random bit string of length $|S|$ taken from $R$. Therefore the difference between $h_i$ and $h_i'$ is the inner product of a uniformly random string and a non-zero string and as such a uniformly distributed bit. The probability that each of the $o$ output bits is zero is now exactly $2^{-o}$.

Lastly, we add one further simplification to our hash function which is that if the output length $o$ is larger than the length $|S|$ of the string $S$ to be hashed then $\hashRo(S,s,\ell)$ simply outputs $S$ as a ``hash'', possibly padded with zeros. This gives a collision probability of zero for any two same-length strings with length at most $o$ and allows to read off the string $S$ from its hash.


\subsection{Summary Structure and Construction}\label{sec:summary}

In this section we formally describe and define the summary structure and construction which follows the informal description given in Section~\ref{sec:summaryinformal}:

The summary algorithm takes the string $F$ it wants to summarize, the parameter $k$ which essentially governs how many hashes are provided per level (in coded form) and the parameter $o$ which determines the hash size of the hashes used. For simplicity of the description we assume that the length $n = |F|$ of the string $F$ equals $4 k \cdot 2^L$ for some integer $L$, i.e., $n$ is a multiple of $4k$ and $\frac{n}{4k}$ is a power of two. This assumption is without loss of generality: One can send the length of $F$ along with the summary and, for the hash computations, extend $F$ to a string of length $4 k \cdot 2^m$, with $L = \ceil{\log_2 \frac{n}{4k}}$, by adding zeros to the end. The recovery algorithm simply adds the same number of zeros to $F'$ during the recovery and removes them again in the end. 

The summary $S_F$ contains the following coded hashes organized into $L+1$ levels:
\vspace{-0.12cm}
\begin{itemize}
	\item  The level zero simply cuts $F$ into $4k$ equal size pieces and records the hashes for each piece. I.e., let $\forall i \in [0,4k-1]: \ H[0,i] = \hashRo(F[1 + i\cdot 2^L, (i+1)\cdot 2^L],i\cdot 2^L,0)$ and we include $H[0,.]$, which consists of $ko$ bits, in the summary. 
\smallskip
	\item For level $\ell \in [1,L]$ we cut $F$ into $4k \cdot 2^{\ell}$ equal size pieces, compute the hash for each piece to form $H[\ell,.]$. The hashes themselves however are too large to be sent completely. Instead our warm-up (deterministic) scheme encodes these hashes using an error correcting code $\C_j$ which is simply the non-systematic part of a systematic linear $[4k 2^{\ell}+100k,4k 2^{\ell},11k]$ error correcting code over $\F_{2^o}$. Such a code exists and many explicit constructions based on algebraic geometry are known if $o = \Theta(\log n/k)$. Our optimal document exchange protocol with $o = \Theta(1)$ requires a more sophisticated scheme which we describe in Section~\ref{sec:alg2}. The information included in $S_F$ for this scheme consists of one or multiple hashes of (subsets of) $H[\ell,.]$ of total size $\Theta(o' k)$ for some sufficiently large constant $o' > o$.
\end{itemize}

We remark that the hashes $H[\ell,.]$ of levels $\ell \geq L-\floor{\log_2 o}$ are hashes of strings of length at most $o$. In this case the hash function $\hashRo$ simply outputs the strings itself making $H[L,.]$ (or already $H[L-\floor{\log_2 o},.]$) essentially equal to $F$ itself. 

In addition to these coded hashes the summary $S_F$ also contains the length $|F|$ and a compact description of the randomness table $R$. Throughout this paper we will use the $\eps$-biased probability spaces of Naor and Naor~\cite{naor1993small} for this compact description. In particular, we prove for all our schemes that the bits in $R$ do not need to be independent uniform bits but that it suffices if they are sampled from a distribution with reasonably small bias. The often exploited fact that a sample point from an $\eps$-biased distribution over $n^{O(1)}$ bits can be described by only $O(\log \frac{1}{\eps} + \log n)$ bits allows us to give very compact descriptions of the randomness used and send these along in the summary $S_F$. In particular, we do \emph{not} need to assume that the summary construction and the summary recovery algorithm have any shared source of randomness.

\section{Recovery Algorithms}

This section describes our recovery algorithms. We start in Section~\ref{sec:matchings} by defining hash induced substring matchings, which form the basis for our algorithms and their analysis. We then describe our recovery algorithms. In Section~\ref{sec:alg1} we first describe a randomized algorithm which produces a summary of size $\Theta(k \log^2 \frac{n}{k})$. This is a good warm-up for our main result. It demonstrates the overall algorithmic structure common to both our recovery algorithms, introduces the basic probabilistic analysis used to analyze them, and makes it easier to understand the problems that need to be addressed when pushing both the algorithmic ideas and the analysis to the limit for our main result. We also show in \Cref{sec:derandomization} how to derandomize this scheme to obtain \Cref{thm:maindet}. 
Lastly, Section~\ref{sec:alg2} contains the order optimal randomized hashing scheme which uses constant size hashes and thus achieves the optimal summary size of $\Theta(k \log \frac{n}{k})$.

\subsection{Hash Induced Substring Matchings}\label{sec:matchings}

For every $n,k,\ell$ we say that two index sequences $i_1,\ldots,i_{k'},i'_1,\ldots,i'_{k'} \in [1,n-2^{L-\ell}+1]$ of length $k'$ are a level-$\ell$ size-$k'$ (sub-string) matching between two strings $F, F' \in \{0,1\}^n$ induced by $\hashRo$ if
\vspace{-0.12cm}
\begin{itemize}
	\item $i_1 \leq i_2 \leq \ldots \leq i_{k'}$,
	\item all $i$-indices are starting points of blocks that got hashed in $S_F$ in level $\ell$, i.e, $i_j-1$ is a multiple of $2^{L-\ell}$ for every $j \in [1,k']$, and
	\item hashes of the strings in blocks that are matched are identical (we also say the hashes are matching or consistent), i.e., for all $j \in [1,k']$ we have that $\hashRo(F[i_j,i_j+2^{L-\ell}-1],i_j-1,\ell)$ equals $\hashRo(F'[i'_j,i'_j+2^{L-\ell}-1],i_j-1,\ell)$. In the rare instances where we (temporarily) relax this requirement we speak of a \emph{non-proper} matching. 
\end{itemize}
Furthermore, we call such a matching
\vspace{-0.12cm}
\begin{itemize}
	\item \emph{monotone} if $i'_1 \leq i'_2 \leq \ldots \leq i'_{k'}$,
	\item \emph{disjoint} if intervals that are matched in $F'$ are not overlapping, i.e., we have for every $j,j' \in [1,k']$ with $j \neq j'$ that $|i'_{j} - i'_{j'}| \geq 2^{L-\ell}$.
	\item \emph{bad} or \emph{$k'$-bad} if for every matched blocks in $F$ and $F'$ the actual strings are non-identical (despite having identical hashes), i.e., if for all $j \in [1,k']$ we have that $F[i_j,i_j+2^{L-\ell}-1] \neq F'[i'_j,i'_j+ 2^{L-\ell}-1]$.
	\item \emph{$k$-plausible} if it is monotone and $|i_1 - i'_1| + |(|F| - i_{k'}) - (|F'| - i'_{k'})| + \sum_{j=1}^{k'-1} |(i_j - i'_j) - (i_{j+1} - i'_{j+1})| \leq k$. Note that a monotone matching is $k$-plausible if it can be explained by at most $k$ insertion and deletion operations. 	
\end{itemize}
We generally assume all our matchings to be proper, monotone, and disjoint and often omit these qualifiers. Whenever we talk about non-necessarily monotone, not-necessarily proper or not-necessarily non-disjoint matchings we explicitly label these matchings as \emph{non-disjoint}, \emph{non-proper} and/or \emph{non-monotone}. Furthermore, if the context allows it we sometimes omit mentioning the strings $F,F'$, the hash function $\hashRo$, or the level $\ell$ with respect to which a matching satisfies the above conditions. 

For any two strings $F,F'$ one can compute a monotone matching of maximum size in time linear in the length of the strings $n$ and polynomial in $k$ using a standard dynamic program. The same is true for a maximum size disjoint, bad, or $k$-plausible monotone matching.
It is furthermore likely that using the same techniques which transforms the standard $O(nk)$ dynamic program for edit distance into an $O(n + k^2)$ dynamic program~\cite{ukkonen1985algorithms} can also be used to obtain $O(n + k^{O(1)})$ algorithms for the above maximum monotone matching variants as well. 

%

\subsection{Algorithm 1: Simple Level-wise Recovery}\label{sec:alg1}

\begin{algorithm}[htb!]
\caption{Simple Recovery with $o = \Theta(\log \frac{n}{k})$ and Summary Size $O(k \log^2 \frac{n}{k})$}
\begin{algorithmic}[1]
\algorithmsize

\State get $H_F[0,.]$ from $S_F$
\Statex
\For {$\ell = 0$ to $L-1$}
	
	\Statex
	\State $M_l \gets$ largest level-$\ell$ disjoint monotone matching from hashes in $H_F[\ell,.]$ into $F'$ 
	
	\Statex
	\Comment{{\bfseries Recover level $\ell+1$ hashes}}
	\State $\tilde{H}_F[\ell+1,.] \gets$ guesses for level $\ell+1$ hashes using $M_{\ell}$ and $F'$
	\State $H_F[\ell+1,.] \gets \text{Decode}_{C_{\ell+1}}(\tilde{H}_F[\ell+1,.],\text{encoding of $H_F[\ell+1,.]$ from $S_F$})$

\EndFor	

	\Statex
	\State $F \gets H_F[L,.]$

\end{algorithmic}
\label{alg:SimpleRecovery}
\end{algorithm}

Our first recovery algorithm, which we call Simple Level-wise Recovery, is now easily given (see also the pseudo-code description of this algorithm, which is given as Algorithm~1): 

Assume that the recovery algorithm has recovered all $4k \cdot 2^\ell$ hashes $H_F[\ell,.]$ of $F$ at level $\ell$ correctly. Initially $\ell=0$ and this assumption is trivially true because these hashes are included in the summary $S_F$. Equipped with these $4k \cdot 2^\ell$ hashes the algorithm finds the largest monotone disjoint matching between the level $\ell$ blocks in $F$ and blocks in $F'$ of the same length. The recovery algorithm now \emph{guesses} the level $\ell+1$ hashes $H_F[\ell+1,.]$ of $F$ using $M_{\ell}$ and $F'$ as follows:  Each block of $F$ in level $\ell$ splits into exactly two blocks in level $\ell+1$. For any block in $F$ that is matched to a sub-string in $F'$ with an identical hash the recovery algorithm guesses that the strings in these blocks are also identical and computes the hashes for the two sub-blocks in level $\ell+1$ by applying $\hashRo$ to the appropriate sub-string in $F'$. If a block in $F$ is not matched one can fill in something arbitrarily as a guess or mark it as an erasure. The hope is that the vector of hashes $\tilde{H}_F[\ell+1,.]$ for level $\ell+1$ guessed in this way is close in Hamming distance to the correct hashes $H_F[\ell+1,.]$. If this is the case, concatenating $\tilde{H}_F(\ell+1,.)$ with the redundancy $\text{Enc}_{\ell+1}$ for level $\ell+1$ from $S_F$ and decoding this to the closest codeword in $C_\ell$ correctly recovers the level $\ell+1$ hashes $H_F[\ell+1,.]$ and allows the algorithm go proceed to the next iteration and level. In this way the recovery algorithm iteratively recovers the hashes for every level one by one until level $L$. In level $L$ blocks are of constant size and $\hashRo$ becomes the (padded) identity function such that one can read off $F$ from $H_F[L,.]$.

\subsection{Correctness of Algorithm 1 and $k$-Bad Matchings}

In this subsection we give a sufficient condition for the correctness of Algorithm $1$. In particular, we prove that if there is no $k$-bad self-matching in $F$, i.e., a size-$k$ bad monotone disjoint matching, between $F$ and itself under $\hashRo$, then Algorithm $1$ recovers $F$ correctly. In Appendix~\ref{sec:derandomization} we then show that hashes of size $o = \Omega(\log \frac{n}{k})$ and only little randomness in $R$ are sufficient to make the existence of such a witness unlikely, in fact, so little randomness that one can easily derandomize the algorithm.




To prove the correctness of Algorithm 1 we will argue that the matching computed in each level is sufficiently large and, in the absence of a $k$-bad self-matching, of sufficient quality to allow the recovery of the hashes for the next level using the redundancy in $S_F$. This allows the recovery algorithm then to proceed similarly with the next level.

It is easy to see that the matching computed is always large assuming that $F$ and $F'$ are not too different:

\begin{lemma}\label{lem:matchingsize}
Assuming that the hashes for level $\ell$ were correctly recovered, Algorithm 1 computes a matching of size at least $4k \cdot 2^l - ED(F,F')$ in level $\ell$.
\end{lemma}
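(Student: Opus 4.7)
The plan is to show the bound by exhibiting an explicit matching of size at least $4k \cdot 2^\ell - ED(F,F')$; since the algorithm picks a \emph{largest} disjoint monotone matching, any such explicit matching yields the claimed lower bound.

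First I would fix an optimal edit alignment $A$ between $F$ and $F'$ realizing $ED(F,F')$, i.e., a sequence of at most $ED(F,F')$ insertion/deletion/substitution operations transforming $F$ into $F'$. The level-$\ell$ blocks of $F$ partition $F$ into $4k \cdot 2^\ell$ disjoint intervals of length $2^{L-\ell}$ starting at positions $i_j = 1 + (j-1) \cdot 2^{L-\ell}$ for $j \in [1, 4k \cdot 2^\ell]$. Call a block $B_j$ \emph{clean} under $A$ if none of the edit operations of $A$ touches a position inside $B_j$ (for substitutions/deletions this means no affected position of $F$ lies in $B_j$; for insertions it means no inserted symbol is mapped into the image of $B_j$ in the alignment). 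Since the blocks are disjoint, each edit operation makes at most one block non-clean, so the number of clean blocks is at least $4k \cdot 2^\ell - ED(F,F')$.

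Next I would observe that for every clean block $B_j = F[i_j, i_j + 2^{L-\ell} - 1]$ the alignment $A$ maps $B_j$ to an identical substring $F'[i'_j, i'_j + 2^{L-\ell} - 1]$ of the same length (since the alignment restricted to a clean block is the identity). Identical strings always produce identical hashes under $\hashRo$, regardless of the table $R$, so hashes match for each such pair. Because $A$ is a monotone alignment, the positions $i'_j$ of clean blocks appear in the same order as the $i_j$, giving monotonicity. Because the images in $F'$ of distinct clean blocks are disjoint intervals (the alignment is injective outside of inserted/deleted positions, and those are not inside clean blocks), the matching is disjoint. The $i_j$ are block starting points by construction, so the matching is proper.

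Thus the explicit matching from the clean blocks is a proper, monotone, disjoint level-$\ell$ matching between the hashes in $H_F[\ell,.]$ (which were correctly recovered by assumption) and $F'$ of size at least $4k \cdot 2^\ell - ED(F,F')$. Since $M_\ell$ is defined as the largest such matching, $|M_\ell| \geq 4k \cdot 2^\ell - ED(F,F')$, establishing the lemma.

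The only mildly subtle point is making precise what it means for an insertion to ``touch'' a block; this is purely bookkeeping on the alignment and not a real obstacle. The argument does not use the randomness of $R$ at all, which is consistent with the fact that the lemma is about matching size only — the subsequent lemmas will be the ones that need the bad-self-matching assumption to argue that the matching is of high enough \emph{quality}.
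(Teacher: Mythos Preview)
Your proposal is correct and follows essentially the same approach as the paper: fix an optimal edit sequence, observe that each edit affects at most one level-$\ell$ block, match the unaffected blocks to their identical images in $F'$, and use maximality of $M_\ell$. Your write-up is simply more explicit about verifying monotonicity and disjointness, which the paper leaves implicit.
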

\begin{proof}
Since $F$ and $F'$ differ by only $ED(F,F')$ insertion, deletions, or symbol corruptions and since each such edit can affect at most one block we know that we can look at the monotone matching which matches all blocks in $F$ which did not suffer from such an edit to its identical sub-string in $F'$. Since the hashes were correctly recovered and the hashes use the same parts of $R$ to compute the inner-producet hashes this is a valid monotone matching of size $4k \cdot 2^l - ED(F,F')$. Since Algorithm 1 computes the largest valid matching it finds a matching of at least this size. 
\end{proof}

\medskip

We would like to say that if in the summary $S_F$ random enough hash functions with a small enough collision probability are used, which usually result from a sufficiently unbiased $R$ and a large enough hash output length $o$, then most of the matching pairs computed by Algorithm 1 are correct, i.e., correspond to sub-strings of $F$ and $F'$ that are identical under $\hashRo$ with the randomness $R$ used. For any matching which contains too many pairs of substrings which are not-identical but have the same hashes we abstract out a witness which explains why $\hashRo$ failed. For this we focus on bad matching pairs that go between non-identical intervals in $F$ and $F'$ which do not contain any edits. This however is exactly a $k$-bad selfmatching in $F$. The advantage of looking at such a witness is that its existence only depends on $S_F$ (or $\hashRo$, $R$ and $F$) but not on $F'$.


%

\begin{lemma}\label{lem:nobadmatchingimpliesgoodmatches}
Assume that the hashes for level $i$ were correctly recovered and that for level $\ell$ there is no $k$-bad matching of $F$ to itself under $\hashRo$. Then the monotone matching computed by Algorithm 1 for this level matches at most $ED(F,F') + k$ non-identical blocks in $F$ and $F'$.
\end{lemma}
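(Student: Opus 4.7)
Plan: I would prove this by contradiction. Suppose $M_\ell$ contains more than $ED(F,F') + k$ bad pairs (pairs whose two strings differ despite their hashes matching). I will construct a $k$-bad monotone disjoint self-matching of $F$ at level $\ell$ from these, contradicting the hypothesis.

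Fix an optimal edit script from $F$ to $F'$ and its induced monotone partial alignment of positions. Call a matched pair $(i_j, i'_j)\in M_\ell$ \emph{clean} if (i) no position of $F'$ in $[i'_j, i'_j + 2^{L-\ell} - 1]$ is inserted or substituted, and (ii) no deletion in $F$ lies strictly between the $F$-positions aligned to the two endpoints of this $F'$-interval; otherwise call it \emph{dirty}. Each edit of the script can make at most one pair of $M_\ell$ dirty: since $M_\ell$'s $F'$-intervals are disjoint, each insertion or substitution lies in at most one of them; and by monotonicity of the alignment, the $F$-preimages of distinct matched $F'$-intervals are themselves pairwise disjoint, so each deletion lies strictly inside at most one such preimage. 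Hence at most $ED(F,F')$ pairs are dirty, leaving more than $k$ bad pairs that are clean.

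For any clean bad pair $(i_j, i'_j)$, (i)+(ii) imply that the alignment sends $[i'_j, i'_j + 2^{L-\ell} - 1]$ to a consecutive $F$-interval starting at some position $i''_j$ with $F[i''_j, i''_j + 2^{L-\ell} - 1] = F'[i'_j, i'_j + 2^{L-\ell} - 1]$. Form the self-matching consisting of the pairs $(i_j, i''_j)$ for all clean bad indices $j$. Since $\hashRo(\cdot, i_j-1, \ell)$ agrees on equal inputs, $\hashRo(F[i_j,\ldots], i_j-1, \ell) = \hashRo(F'[i'_j,\ldots], i_j-1, \ell) = \hashRo(F[i''_j,\ldots], i_j-1, \ell)$, so the self-matching is proper. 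The $i_j$s are level-$\ell$ block starts and strictly increasing; the $i''_j$s inherit monotonicity and disjointness from the $i'_j$s via the monotone alignment; and each pair is bad since $F[i_j,\ldots] \neq F'[i'_j,\ldots] = F[i''_j,\ldots]$. Truncating to any $k$ clean bad pairs yields the required $k$-bad self-matching of $F$, contradicting the assumption.

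The main subtlety is the bookkeeping that charges each edit to at most one matched pair. Inserted and substituted $F'$-positions are straightforward to charge by $F'$-disjointness of $M_\ell$. The delicate case is deletions, which live in $F$ rather than $F'$; handling them requires invoking monotonicity of the alignment to argue that the $F$-preimages of the disjoint matched $F'$-intervals are themselves disjoint, which is exactly what is needed so that each deletion is strictly inside at most one such preimage and hence taints at most one pair.
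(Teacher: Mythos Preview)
Your proof is correct and follows essentially the same approach as the paper's: split the bad matched pairs into those touched by an edit (at most $ED(F,F')$ of them) and the remaining ``clean'' ones, then observe that each clean bad pair gives a bad self-match in $F$, so fewer than $k$ of them can exist. The paper's version is considerably more terse and does not separate out the deletion case; your explicit charging argument and verification of monotonicity and disjointness for the induced self-matching make the same argument rigorous.
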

\begin{proof}
$F$ and $F'$ differ by only $ED(F,F')$ insertion, deletions, or symbol corruptions and  each such edit can affect at most one of the blocks in $F'$ that are matched to a non-identical block in $F$. Therefore there are at most $ED(F,F')$ such matches in the monotone matching computed by Algorithm 1. Furthermore, if we restrict ourselves to the matches between non-identical sub-strings in $F$ and $F'$
computed by Algorithm 1 which are not of this type it is true that each of these matches comes from matching a sub-string in $F$ to a sub-string in $F'$ which, due to having no edits in it, is identical do a sub-string in $F$. Since, by assumption, Algorithm 1 used the correctly recovered level $\ell$ hashes and computes a monotone disjoint matching these matches form a bad matching in $F$. By assumption this matching can be of size at most $k$ giving the desired bound of at most $ED(F,F') + k$ non-identical blocks matched in $F$ and $F'$ by Algorithm 1
\end{proof}
\medskip

Lastly, because we use error correcting codes with sufficiently large distance we can easily correct for the $k$ missing hashes and $2k$ incorrect hashes. 

\begin{lemma}\label{lem:algonecorrectnesswithoutkmatch}
Assume that for all levels there is no $k$-bad matching of $F$ to itself under $\hashRo$ used to compute $S_F$, which is given to Algorithm 1 as an input. Furthermore assume that the input file $F'$ satisfies $ED(F,F') \leq k$. Then, Algorithm 1 correctly outputs $F$. 
\end{lemma}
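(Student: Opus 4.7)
The plan is to induct on the level $\ell$ from $0$ up to $L$ with the invariant that $H_F[\ell,.]$ is recovered exactly. The base case $\ell = 0$ is immediate: the vector $H_F[0,.]$ is included verbatim in $S_F$. At the final level $\ell = L$, the block length is $1 \leq o$, so by the convention on $\hashRo$ the stored hashes \emph{are} the corresponding symbols of $F$, and reading off $H_F[L,.]$ produces $F$.

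The heart of the argument is the inductive step, which bounds the number of positions in which the guess $\tilde{H}_F[\ell+1,.]$ can differ from the true vector $H_F[\ell+1,.]$. Assuming by induction that $H_F[\ell,.]$ is correct, \Cref{lem:matchingsize} says the matching $M_\ell$ computed by Algorithm~1 has size at least $4k \cdot 2^\ell - ED(F,F') \geq 4k\cdot 2^\ell - k$, so at most $k$ of the level-$\ell$ blocks of $F$ are unmatched. Since by hypothesis there is no $k$-bad self-matching of $F$ at level $\ell$ under $\hashRo$, \Cref{lem:nobadmatchingimpliesgoodmatches} applies and at most $ED(F,F') + k \leq 2k$ of the matched level-$\ell$ pairs in $M_\ell$ match non-identical substrings. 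For every matched pair whose substrings \emph{are} identical, the two children hashes in level $\ell+1$ are computed by applying $\hashRo$ to an identical substring (with the exact same part of $R$), so those guesses are correct. Thus the only positions where $\tilde{H}_F[\ell+1,.]$ can be wrong arise from the $\leq k$ unmatched parent blocks and the $\leq 2k$ wrongly matched parent blocks, each of which parents exactly two level-$(\ell+1)$ hashes.

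Marking the $\leq 2k$ children of unmatched blocks as erasures and treating the $\leq 4k$ children of wrongly matched blocks as (possible) errors, the total damage to $\tilde{H}_F[\ell+1,.]$ as a noisy systematic codeword of $\C_{\ell+1}$ is at most $2k$ erasures and $4k$ errors. Since $\C_{\ell+1}$ has minimum distance $11k$ and $2\cdot 4k + 2k = 10k < 11k$, bounded-distance decoding of the concatenation of $\tilde{H}_F[\ell+1,.]$ with the parity symbols taken from $S_F$ returns the unique codeword whose systematic part is $H_F[\ell+1,.]$. This closes the induction, and at level $L$ we obtain $F$. The main ``obstacle'' is really just the bookkeeping in the inductive step: one must see that the no-$k$-bad-self-matching hypothesis is exactly what turns a hash-collision-free argument into a \emph{deterministic} guarantee on how much of $\tilde{H}_F[\ell+1,.]$ can be corrupted, and then check that the error-and-erasure budget fits inside the distance $11k$ of $\C_{\ell+1}$.
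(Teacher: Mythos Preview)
Your proof is correct and follows essentially the same induction-on-levels argument as the paper, invoking \Cref{lem:matchingsize} and \Cref{lem:nobadmatchingimpliesgoodmatches} to bound the unmatched and wrongly-matched level-$\ell$ blocks and then checking that the resulting damage at level $\ell+1$ is within the decoding radius of $\C_{\ell+1}$. Your erasure/error accounting ($2k$ erasures plus $4k$ errors, so $2\cdot 4k + 2k = 10k < 11k$) is in fact tidier than the paper's, which lumps everything into $6k$ errors and appeals to a distance of $13k$ that is inconsistent with the $11k$ stated in the summary construction.
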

\begin{proof}
We will first show by induction on the level $\ell$ that Algorithm 1 correctly recovers the level $\ell$ hashes $H_F[\ell,.]$ of $F$ that were computed for $S_F$. For level $\ell=0$ this is trivial because these hashes are a part of $S_F$ and therefore given to Algorithm 1 as an input. For the induction step we want to show that the hashes for level $\ell+1$ will be correctly recovered assuming that this has successfully happened for the hashes for level $\ell$. Here Lemma~\ref{lem:matchingsize} guarantees that a matching of size $4k \cdot 2^l - ED(F,F')$ is computed which results in at most $k = ED(F,F')$ blocks in the level $\ell$ of $F$ having no match and therefore at most $2k$ hashes in level $\ell+1$ are assigned the erasure symbol ``?'' in the guessing step of Algorithm 1. Furthermore, the assumptions for Lemma~\ref{lem:nobadmatchingimpliesgoodmatches} are satisfied guaranteeing that at most $ED(F,F')+k = 2k$ of the matchings computed by Algorithm 1 in level $\ell$ belong to non-identical strings in $F$ and $F'$. This results in at most $4k$ of the hash values computed in the guessing step of Algorithm 1 being incorrect. The Hamming distance between the correct hashes $H_F[i+1,.]$ and the estimate $\tilde{H}_F[i+1,.]$ produced by Algorithm 1 is therefore at most $6k$. Given that the error correcting code used has distance $13k$ these errors will be corrected leading to a correct recovery of the level $\ell+1$ hashes in Step $5$ of Algorithm 1. In its last iteration Algorithm 1 will correctly recover the level $L$ hashes $H_F[L,.]$, which, as discussed at the end of Section~\ref{sec:summary}, is equal to $F$ (up to padding extra zeros to each hash). 
\end{proof}
\medskip

It remains to show that $k$-bad matchings are highly unlikely. 
\Cref{lem:okbiasednomatching} does exactly this. It shows that for sufficiently random $R$ and $o = \Omega(\log \frac{n}{k})$ with high probability no $k$-bad matching exists. We start by showing that a distribution whose bias is exponentially small in $k \log \frac{n}{k}$ suffices to avoid a $k$-bad matching with high probability. This is sufficient to guarantee the correctness of Algorithm 1. We furthermore shows how to extend these arguments to much lower quality distributions with a polynomially large bias. Since this second part is important for our derandomization but not needed to understand our main result we defer this second part to \Cref{sec:derandomization}.

\begin{lemma}\label{lem:okbiasednomatching}
For every sufficiently large $c \geq 1$ it holds that if $o = c \log \frac{n}{k}$ and $R$ is sampled from an $(2^{-2ok})$-biased distribution of bits then for every level $\ell$ the probability that there exists a $k$-bad self-matching of $F$ under $\hashRo$ is at most $2^{-\Omega(ok)}$.
\end{lemma}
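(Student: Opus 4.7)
The plan is a standard union bound over all possible $k$-bad self-matchings at level $\ell$, combined with a per-matching probability estimate coming from the fact that a bad matching forces $ok$ independent non-trivial linear constraints on the bits of $R$. First I would parse what a bad matching imposes. Fix any candidate $k$-bad monotone disjoint matching with indices $i_1,\ldots,i_k$ (block starts) and $i'_1,\ldots,i'_k$. Recall that $\hashRo(S,s,\ell)$ and $\hashRo(S',s,\ell)$ use the \emph{same} randomness bits $R[s+t-1,\ell,i]$, so the hash-collision condition for the $j$-th pair and output bit $i$ reads $\bigoplus_{t=0}^{2^{L-\ell}-1} D_j[t]\cdot R[i_j-1+t,\ell,i]=0$, where $D_j[t]=F[i_j+t]\oplus F[i'_j+t]$. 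Because the matching is bad, each $D_j$ is a non-zero vector, so each of the $ok$ constraints is a genuinely non-trivial linear form in the bits of $R$.

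The next key observation is disjointness of the randomness across constraints. The $i_j$'s are distinct starting points of the level-$\ell$ block partition, hence separated by $2^{L-\ell}$; the intervals $[i_j-1,\,i_j-1+2^{L-\ell}-1]$ of first-coordinate indices of $R$ are therefore pairwise disjoint, and different output bits $i$ use different third coordinates of $R$. So all $ok$ linear forms live over pairwise disjoint subsets of $R$, and any non-empty XOR of them is again a non-trivial affine form in disjoint $R$-variables. Under a truly uniform $R$ the joint probability that all constraints vanish is exactly $2^{-ok}$; under a $\eps$-biased distribution I would invoke the standard Vazirani-style XOR argument
\[
\Pr[L_1=0\wedge\cdots\wedge L_{ok}=0]=2^{-ok}\sum_{T\subseteq[ok]}\mathbb{E}\bigl[(-1)^{\bigoplus_{i\in T}L_i}\bigr]\le 2^{-ok}+\eps,
\]
which with $\eps=2^{-2ok}$ gives per-matching failure probability at most $2\cdot 2^{-ok}$.

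It remains to count bad matchings and take a union bound. At level $\ell$ the $i_j$'s are chosen from the $4k\cdot 2^{\ell}$ block boundaries and the $i'_j$'s form a disjoint monotone $k$-tuple in $[1,n]$, so the number of candidate matchings is at most $\binom{4k\cdot 2^{\ell}}{k}\binom{n}{k}\le (4e\cdot 2^{\ell})^k(en/k)^k\le(e^2\,n^2/k^2)^k=2^{\,2k\log(n/k)+O(k)}$, using $2^{\ell}\le n/(4k)$. Multiplying by the per-matching bound gives
\[
\Pr[\exists\,k\text{-bad matching at level }\ell]\le 2^{\,2k\log(n/k)+O(k)-ok+1}.
\]
Setting $o=c\log(n/k)$ for $c>2$ (and, say, $c\ge 4$ to obtain a constant-factor exponent), this bound is $2^{-\Omega(ok)}$ as claimed.

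The one genuinely delicate point is the second step—arguing that all $ok$ linear forms really are supported on pairwise disjoint coordinates of $R$ and hence that every non-empty XOR of them remains a non-trivial affine form. This is what lets the $\eps$-biased XOR calculation apply cleanly, and it relies crucially on the fact that both sides of a matched pair re-use the \emph{same} starting position $i_j-1$ in $R$, so that only the $i_j$'s (which are distinct block boundaries) determine the randomness footprint of pair $j$. Once this is set up, the counting and the Vazirani-type bound slot together mechanically to yield the stated probability.
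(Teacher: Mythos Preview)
Your proposal is correct and follows essentially the same approach as the paper: bound the per-matching collision probability by $2^{-ok}$ (plus a bias correction), count candidate matchings by $2^{O(k\log(n/k))}$, and union-bound. The only cosmetic difference is in the $\eps$-biased step: the paper phrases it as ``each of the $2^{ok}$ outcomes of the $ok$ linear tests has probability within $2^{-2ok}$ of uniform,'' whereas you spell out the Vazirani-style XOR expansion---these are equivalent, and your explicit verification that the $ok$ linear forms sit on pairwise disjoint coordinates of $R$ is exactly what makes either formulation go through.
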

\begin{proof}
Suppose for sake of simplicity that $R$ is sampled from iid uniformly random bits. In this case the probability for any individual sub-strings of $F$ to have a bad hash is $2^{-o}$. Furthermore, for a fixed $k$-bad matching the probability that all matching pairs are bad under $\hash$ is $2^{-ok}$. There furthermore exist at most $\binom{n}{k}^2 = 2^{O(k \log \frac{n}{k})}$ ways to choose the indices for a potential $k$-bad matching and therefore also at most $2^{O(k \log \frac{n}{k})}$ many such matchings. Taking a union bound over all these potentially bad matchings and choosing the constant $c$ large enough this guarantees that the probability that there exists a $k$-bad matching is at most $2^{-ok} \cdot 2^{O(k \log \frac{n}{k})} = 2^{-\Omega(ok)}$. 

Next we argue that the same argument holds if $R$ is $2^{-cok}$-biased. In particular, for every $k$- matching determining whether it is a $k$-bad matching only depends on the outcome of $ok$ linear tests on bits from $R$. For each of the $2^{ok}$ different outcomes for these tests the probability deviates at most by $2^{-2ok}$ from the setting where $R$ is sampled from iid uniformly random bits and is therefore still at most $2^{-ok}+2^{-2ok} = 2^{-\Omega(ok)}$. The same union bound thus applies. 
\end{proof}
\medskip

\subsection{Algorithm 2: Using Constant Size Hashes}\label{sec:alg2}

In this section we give a more sophisticated and even more robust recovery algorithm. Surprisingly this algorithm works even if the hash size $o$ used in the summary computations is a merely a small constant, leading to our main result, the order optimal randomized document exchange hashing scheme. 

The main difference between Algorithm~\ref{alg:SimpleRecovery} and Algorithm~\ref{alg:ImprovedRecovery} is that we take matchings of previous levels into account and restricting ourself to \emph{$k$-plausible} monotone matchings. This is sufficient to reduce the problem to a Hamming type problem on how to communicate the next level of hashes when most of them are already known to the receiver. However, here we cannot use systematic error correcting codes anymore but need to develop more efficient techniques. 


\begin{algorithm}[htb!]
\caption{Randomized Recovery with $o = O(1)$ and Summary Size $O(k \log \frac{n}{k})$}
\begin{algorithmic}[1]
\algorithmsize

\State $M_0 \gets \emptyset$ 
\State get $H_F[0,.]$ from $S_F$
\Statex
\For {$\ell = 0$ to $L-1$}
	
	\Statex
	\Comment{{\bfseries Transform into a proper level $\ell$ matching}}
	\State remove all matches not consistent with $H_F[\ell,.]$ from $M_{\ell}$

	\Statex
	\Comment{{\bfseries Compute plausible disjoint monotone matching for unmatched hashes}}
	\State $\Delta_l \gets$ largest $k$-plausible level-$\ell$ matching of hashes unmatched in $M_\ell$ into $F'$ 
	\State $M_{\ell+1} \gets M_l + \Delta_\ell$   (and split all edges into two making it a $(\ell+1)$-level matching)

	\Statex
	\Statex
	\Comment{{\bfseries Recover level $\ell+1$ hashes}}
	\State Recover $H_F[\ell+1,.]$ by using the substrings matched in $M_{\ell+1}$, guessing $\Theta(k)$ incorrect hashes and their values, and verifying correct guesses with hashes in $S_F$

\EndFor	

	\Statex
	\State $\tilde{F} \gets H_F[L,.]$ 

\end{algorithmic}
\label{alg:ImprovedRecovery}
\end{algorithm}

We note that the matchings $M_\ell$ produced by Algorithm~\ref{alg:ImprovedRecovery} are not necessarily monotone and not necessarily disjoint, i.e., they can contain matches to overlapping intervals in $F'$. At the beginning or end of an iteration the matching might even be a non-proper level $\ell$ matching in that there can be matches which stem from matching level $(\ell-1)$ hashes but do not have matching level $\ell$ hashes. At the beginning of an iteration such matches are removed making the matching proper. In order to analyze the progress of Algorithm~\ref{alg:ImprovedRecovery} we introduce the following notion of an okay matching:

\begin{definition}
We say a level-$\ell$ non-disjoint non-monotone non-proper matching $M_\ell$ at the beginning of an iteration of Algorithm~\ref{alg:ImprovedRecovery} is okay if there are at most $5k$ unmatched hashes or bad matches, i.e., matches between intervals in $F$ and $F'$ that are not-identical.
\end{definition}

We can now prove that each iteration of Algorithm~\ref{alg:ImprovedRecovery} works correctly with exponentially high probability in $k$, as long as it starts with an okay matching and as long as randomness in $R$ is independent between levels and sufficiently unbiased:

\begin{lemma}\label{lem:inductivecorrectnessofalgtwo}
Suppose the randomness in $R$ is at most $\exp(-ok)$-biased and independent between levels, where $o>1$ is a sufficiently large constant. If, at the beginning of iteration $\ell$ of Algorithm $2$, the matching $M_\ell$ is okay and $H_F[\ell,.]$ has been correctly recovered then, with prob $1-\exp(-ok)$, the matching $M_{\ell+1}$ is also okay. 
\end{lemma}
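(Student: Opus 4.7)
The plan is to track how the quantity $|U(M)| + |B(M)|$ evolves through one iteration of Algorithm 2, where $U(M)$ is the set of unmatched hashes in $M$ and $B(M)$ the set of bad matches (pairs between non-identical intervals). Starting from the assumption $|U(M_\ell)| + |B(M_\ell)| \leq 5k$, I aim to show that with probability $1 - \exp(-ok)$ we have $|U(M_{\ell+1})| + |B(M_{\ell+1})| \leq 5k$. I will analyze the three phases of iteration $\ell$ in turn: cleanup (Step 1), extension (Step 2), and splitting, writing $U^1, B^1$ for the unmatched/bad counts after Step 1 and $U^2, B^2$ for those after Step 2.

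For Step 1, I would first observe that good matches have identical intervals and hence deterministically colliding hashes, so only bad matches can be removed; each such removal converts a bad match into an unmatched hash, preserving the sum $|U^1| + |B^1| = |U(M_\ell)| + |B(M_\ell)| \leq 5k$. More importantly, a bad match survives Step 1 only when its differing intervals happen to have colliding level-$\ell$ hashes, an event of probability at most $2^{-o} + \exp(-ok)$ under the biased distribution. Using independence of the level-$\ell$ randomness from $M_\ell$ (which depends only on previous-level randomness), I would apply a Chernoff-type bound on these at most $5k$ nearly-independent collision events to conclude $|B^1| \leq 0.5k$ with probability $1 - \exp(-\Omega(ok))$ for $o$ a sufficiently large constant.

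For Step 2, I would first lower-bound $|\Delta_\ell|$ by exhibiting a specific $k$-plausible matching: assign each unmatched unedited hash in $F$ to its true counterpart in $F'$. Since $ED(F,F') \leq k$, at most $k$ unmatched hashes correspond to edited blocks; the remaining ones have true counterparts, and a direct calculation shows that the boundary terms and the total variation of the shift sequence together sum to at most the total number of edits, which is $\leq k$. Thus this matching is $k$-plausible, giving $|\Delta_\ell| \geq |U^1| - k$. The delicate part is to upper-bound the number $b_\Delta$ of bad matches in $\Delta_\ell$. The key structural observation I would exploit is that $\Delta_\ell$ is restricted to positions in $U^1$, a set of size at most $5k$; hence any $k$-plausible sub-matching containing $m$ specific bad pairs is determined by (i) a choice of $m$ positions among $|U^1| \leq 5k$ and (ii) a shift walk of total variation $\leq k$, giving only $2^{O(k)}$ possibilities. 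For any fixed such matching, the probability that all $m$ bad pairs have colliding hashes is at most $2^{-om} + \exp(-ok)$. Union-bounding over matchings with $\geq 0.5k$ bad pairs and taking $o$ a sufficiently large constant would yield $b_\Delta \leq 0.5k$ with probability $1 - \exp(-\Omega(ok))$.

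Finally, splitting doubles the unmatched count and at most doubles the bad count (since at least one half of a non-identical level-$\ell$ interval remains non-identical). Hence
\[
|U(M_{\ell+1})| + |B(M_{\ell+1})| \leq 2(|U^2| + |B^2|) = 2\bigl(|U^1| + |B^1| - (|\Delta_\ell| - b_\Delta)\bigr) \leq 2k + 2|B^1| + 2b_\Delta,
\]
where the last step uses $|\Delta_\ell| \geq |U^1| - k$. Substituting the high-probability bounds $|B^1|, b_\Delta \leq 0.5k$ gives $|U(M_{\ell+1})| + |B(M_{\ell+1})| \leq 4k \leq 5k$, so $M_{\ell+1}$ is okay; a union bound over the two failure events then gives overall failure probability $\exp(-\Omega(ok))$, which is at most $\exp(-ok)$ for $o$ large enough. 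The hardest part of the proof is the bound on $b_\Delta$: a naive union bound over all $k$-plausible matchings in $F$ gives $\exp(\Omega(k \log(n/k)))$ possibilities, which swamps any constant-$o$ collision probability. Restricting attention to sub-matchings on the small unmatched set $U^1$ of size $\leq 5k$ collapses the combinatorics to $2^{O(k)}$ and is what makes the union bound succeed for constant $o$.
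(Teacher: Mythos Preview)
Your proposal is correct and follows essentially the same approach as the paper. Both arguments (i) bound the unmatched hashes in $M_{\ell+1}$ by exhibiting the ``true'' $k$-plausible matching of unedited blocks, (ii) bound the number of bad matches surviving cleanup via the $2^{-o}$ collision probability and a tail bound, and (iii) bound the new bad matches introduced by $\Delta_\ell$ via a union bound over the $\exp(k)$ possible $k$-plausible matchings on the $\leq 5k$ available positions. The only differences are cosmetic: you track $|U|+|B|$ more explicitly and choose thresholds $0.5k$ for both the surviving and the new bad matches (yielding a final bound of $4k$), whereas the paper uses $k/2$ and $k$ respectively to land exactly at $5k$. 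One minor point worth making explicit in your write-up: the set $U^1$ you union-bound over in Step~2 already depends on the level-$\ell$ randomness (through which bad matches were removed in Step~1), so to keep the union bound honest you should enumerate $k$-plausible matchings on the fixed superset $U(M_\ell)\cup\text{pos}(B(M_\ell))$, which has size $\leq 5k$ and is determined by $M_\ell$ alone; the paper's proof has the same implicit step.
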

\begin{proof}
We want to bound the number of unmatched and bad matches in the matching $M_{\ell+1}$ produced by iteration $\ell$ of Algorithm $2$. By assumption $M_\ell$ is okay and thus has at most $5k$ unmatched or bad matches. 

The number of unmatched hashes in $M_{\ell+1}$ is easily bounded. In particular, for any unmatched hashes there exists a $k$-plausible disjoint monotone matching which leaves at most $k$ hashes unmatched, namely the one which matches all blocks in $F$ that do not have an edit in it. This is in particular true for the hashes that are unmatched after the matching $M_\ell$ has been cleaned up and transformed into a proper level-$\ell$ matching in Step $4$ of Algorithm $2$. Since $\Delta_\ell$ is the largest such matching it leaves at most $k$ hashes unmatched, which split into at most $2k$ unmatched hashes in $M_{\ell+1}$ in Step $6$ of Algorithm $2$. 

Next we bound the number of bad matches in $M_{\ell+1}$. There are two potential sources for bad matches, namely, they can either stem from bad matches in $M_\ell$ that are not identified as bad matches by $H_F[\ell,.]$, or they can be newly introduced by the matching $\Delta_\ell$. 

The expected number of bad matches of the first type is at most $5k \cdot 2^{-o} \cdot 2$ since each of the at most $5k$ bad matches in the okay matching $M_\ell$ has a non-matching hash in $H_F[\ell,.]$ with probability $2^{-o}$ and gets split into two potentially bad matches in $M_{\ell+1}$ if it goes undetected. The probability that this happens to more than $k/2$ matches leading to more than $k$ bad edges of this type in $M_{\ell+1}$ is at most $\exp(-ok)$ even if the probabilities in $R$ are $\exp(-ok)$-biased. 


Next we want to argue that, with probability $1 - \exp(-ok)$, the matching $\Delta_\ell$ introduces at most $k$ new bad matches which get doubled into at most $2k$ bad edges in $M_{\ell+1}$. For this we first bound the number of possible $\Delta_\ell$ matchings, given a fixed okay $M_\ell$ matching, by $\exp(k)$, and then take a union bound. To count the number of possible $\Delta_\ell$ matchings, given a fixed okay $M_\ell$, we specify such a matching by indicating which of the at most $5k$ unmatched hashes in $M_\ell$ are matched and what the offsets of their starting positions is. Since $|i_1 - i'_1| + |(|F| - i_{k'}) - (|F'| - i'_{k'})| + \sum_{j=1}^{k'-1} |(i_j - i'_j) - (i_{j+1} - i'_{j+1})|$, i.e., the sum of the differences between these offsets, is at most $k$ for every $k$-plausible matching the values and signs of these offsets can take on at most $\exp(k)$ different values. Overall there are therefore at most $\exp(k)$ different possibilities for $\Delta_\ell$ given $M_\ell$. Furthermore, the probability for any fixed such matching to be contain $k$ bad matches is at most $\exp(-ok)$, if the randomness in $R$ is independent from $M_\ell$ and at most $\exp(-ok)$-biased. A union bound over all $\exp(k)$ possibilities for $\Delta_\ell$ thus shows that with high probability at most $\exp(k - ok) = exp(-ok)$ there is no $\Delta_\ell$ matching which introduces more than $k$ new bad matches. 

Overall, with probability $1-\exp(-ok)$, this leads to at most $5k$ unmatched hashes or bad matches in $M_{\ell+1}$ at the end of iteration $\ell$ of Algorithm $2$, making it an okay matching as desired.
\end{proof}
\medskip

\Cref{lem:inductivecorrectnessofalgtwo} shows that our improved matching procedure in Algorithm~\ref{alg:ImprovedRecovery} is robust enough to tolerate hashes of constant size $o=\Theta(1)$. In fact, it guarantees that given an okay matching for level $\ell$ and the correctly recovered hashes for level $\ell$ a finer grained matching for level $\ell-1$ is computed which is okay, i.e., which allows all but $5k$ hashes of level $\ell+1$ to be guessed correctly. Algorithm~\ref{alg:ImprovedRecovery} thus achieved the crucial feat of reducing the edit distance document exchange problem to its much simpler Hamming type equivalent in which two parties hold a long string differing by at most $5k$ Hamming errors and one party wants to help the other learn its string.

\paragraph{Remark --  Impossibility of Reconciling $5k$ (Worst-Case) Hamming Errors with $o(k \log \frac{n}{k})$ bits}

\noindent In Algorithm $1$ the reduction to the Hamming problem was all that was needed. There, too, the recovery algorithm found a guess $\tilde{H}[\ell+1,.]$ for $H[\ell+1,.]$ which differed by at most $O(k)$ hashes. Both of these strings of hashes were over an alphabet of $o = \log \frac{n}{k}$ bits and one could then simply use the error correcting code idea of \cite{irmak2005improved} to recover $H[\ell+1,.]$ from $\tilde{H}[\ell+1,.]$ using $\Theta(ko) = \Theta(k \log \frac{n}{k})$ bits of additional information which could be put into $S_F$. Concretely, we used the non-systematic part of a systematic linear $[4k 2^{\ell}+100k,4k 2^{\ell},11k]$ error correcting code over $\F_{2^o}$ to send the equivalent of $O(k)$ hashes and recover the position and correct value for the $5k$ hashes differing between the matching generated guess and the true hashes. 

Unfortunately however, for $o=\Theta(1)$, such error correcting codes cannot exist and in fact it is easy to verify\footnote{Thanks to Xin Li and his group for pointing out this error in the preliminary draft of this work.} that it is impossible to reconciliate two parties holding $n$ long strings over some alphabet $\Sigma$ differing in any $k$ positions without sending at least $\Theta(k \cdot (\log \frac{n}{k} + \log (|\Sigma|-1)))$ bits, because the position of the differences can already encode $\log \binom{n}{k} = \Theta(k \log \frac{n}{k})$ bits. For the encoding used in $S_F$ this implies that either $\Theta(k \log \frac{n}{k})$ bits need to be put into $S_F$ per level to allow the recovery of the $\Theta(k)$ bad or missing hashes, as we do in Algorithm~\ref{alg:SimpleRecovery}, or one needs to have a better understanding of the distribution of the typical positions of bad hashes, together with a better coding scheme which exploits the lower entropy in this distribution to communicate efficiently. In particular, we would like to only use $\Theta(1)$ bits per bad hash to describe its position and correct value. This is what we do next.

\paragraph{Understanding the Distribution of Positions of Incorrectly Matched Substrings and Defining the Forest of Still Consistent Matches}

\noindent Suppose we run Algorithm~\ref{alg:ImprovedRecovery} for $\ell$ iterations. As proved in \Cref{lem:inductivecorrectnessofalgtwo}, with high probability, in each level the matching $\Delta_\ell$ adds at most $5k$ newly matched substrings. These substrings get split in two in every level thereafter or eliminated if non-matching hashes reveal that a match is inconsistent (proving that its guess was wrong). Each matched substring in level $\ell$ can thus be classified by the level $\ell' \leq \ell$ its first ancestor was generated, which of the at most $5k$ newly matched substrings in level $\ell'$ this ancestor was, and which of the at most $2^{\ell-\ell'}$ substrings stemming from this ancestor it is. In this way the set of all substrings matched throughout Algorithm~\ref{alg:ImprovedRecovery} can be naturally organized into $\leq 5k$ binary trees of depth $\ell-\ell'$ for each level $\ell' \leq \ell$. We call this the \emph{forest of all matches}.

Throughout Algorithm~\ref{alg:ImprovedRecovery} some of these matches are discovered to be inconsistent and removed from $M_{\ell}$. In particular, once Algorithm~\ref{alg:ImprovedRecovery} recovers the correct hashes $H[\ell,.]$ at the end of iteration $\ell-1$ it will, at the beginning of the next iteration, check for every matching edge in $M_{\ell}$ whether it is consistent and otherwise remove it from $M_{\ell}$ to make the matching proper. It is important to note that if a match is discovered to be bad then all ancestor matches in the forest of all matches are proven to be bad as well, despite their consistent hashes having failed to detect this badness at the time because of a hash collision. We say that a match is \emph{still consistent} if it has not (yet) been proven to be bad through an inconsistent hash of a descendant. The set of all substrings matched throughout Algorithm~\ref{alg:ImprovedRecovery} which are still consistent is similarly nicely organized as a forest of binary trees, where there leafs are exactly the matches/substrings in $M_{\ell}$. The main difference to the forest of all matches is that when a match in $M_{\ell}$ is discovered to be inconsistent then it and all its ancestor matches in its leaf-to-root path are removed. This cuts a tree of depth $d$ into up to $d-1$ trees, at most one for each depth smaller than $d$. The number of trees in level $\ell'$ therefore never exceeds the number of matches made in iteration $\ell'$ plus the number of bad matches (from previous iterations) in iteration $\ell'$. According to \Cref{lem:inductivecorrectnessofalgtwo} this is at most $5k$, with high probability. Throughout the rest of this paper we focus on the forest of still consistent matches. We say that a match in $M_{\ell}$ \emph{stems} from level $\ell' \leq \ell$ if its root in the forest of still consistent matches is a level $\ell'$ match. 

Since hashes fail independently with probability $\exp(-o)$ having a substring stemming from level $\ell'$ be discovered to be wrong has probability $\exp(-(\ell-\ell')o)$. Of course there are also more of these substrings, namely up to $5k 2^{\ell-\ell'}$ many of them. However, given that $o$ is a sufficiently large constant a union bound shows that one still expects most bad matches to be among the substrings stemming from higher levels with the expected number of bad hashes decaying exponentially with the level they are stemming from. This is quite intuitive, given that matches from these more recent iterations have not been included/tested by hashes quite as often. 

We will rely on this exponential concentration of bad hashes towards the much smaller number of positions corresponding to recent matches when trying to recover the correct $H[\ell+1,.]$ from 
the guesses for these level $\ell+1$ hashes generated by $M_{\ell+1}$. In particular, we identify a sufficiently small number of plausible guesses for sets of matches or positions in $H[\ell+1,.]$ which, with high probability, include at least one guess which \emph{covers} all inconsistent hashes. In fact, we will show that with high probability there is a guess which is a super-set of all bad hashes. For any such guessed set of positions for bad/inconsistent hashes we then enumerate all possible values for these positions to get a guess for the correct $H[\ell+1,.]$. We then use some extra hash (or hashes) of $H[\ell+1,.]$, which are included in $S_F$, to verify if which of the enumerated choices for $H[\ell+1,.]$ is correct.

\paragraph{Enumerating Plausible Guesses Using $t$-Witnesses}

To formally implement this intuition and strategy and to prove that it works we use combinatorial structures we call $t$-witnesses. They are a specially formated way of specifying some $\Theta(t)$ guesses for incorrect positions in $H[\ell,.]$ (or equivalently inconsistent matches). To specify a $t$-witness at level $\ell$, i.e., a guess of at most $\Theta(t)$ matches stemming from levels $\ell$ and above, we first specify a non-negative number for each of the last $\min\{l,t\}$ levels, i.e., for each integer $0 \leq i \leq \min\{\ell-1,t-1\}$ we specify an integer $0 \leq b_i \leq t$ with the restriction that $\sum_i i b_i \leq t$. As we will describe later these $b_i$ essentially specify the number of extra substrings stemming from level $\ell-i$ for which our guessed hash is not matching the actual hash for the next iteration because the substring is incorrect but has gone undetected so far. To specify which substrings among those in this level those are we have for each integer $0 \leq i \leq \min\{\ell-1,t-1\}$ a subset of positive integers $B_i \subseteq \{1,2,\ldots,t 2^i\}$ of size $|B_i| \leq \floor{t 2^{-i}} + b_i$.

\medskip

Next we explain how exactly a $t$-witness in level $\ell$ for $t > \Theta(k)$ encodes a set of at most $\sum_i |B_i| = \sum_i \floor{t 2^{-i}} + \sum_i b_i = 2t + t = 3t$ matches for a given matching $M_\ell$. Recall that these matches exactly correspond to leafs in the forest of still consistent matches. Process the $B_i$ sets from the largest $i$ to the smallest and process each $j \in B_i$ from the smallest to the largest. In particular, we start with the largest $i$ for which $B_i$ is non-empty and select the smallest integer $j \in B_i$. This specifies a leaf stemming from level $\ell-i$ in one of trees of depth $i$ in the forest of still consistent matches by simply taking the $\ceil{j/2^i}$th such tree and selecting its $((j \mod 2^i)+1)$th leaf (counting leafs in cut-out subtrees as well). Any leaf can be specified this way if there are at most $t$ trees. Before continuing to process the next (larger) $j$ (or smaller $i$ if there is no further integer in the current $B_i$) we cut all nodes from the chosen substring to its root, creating at most $i$ subtrees of smaller depth which we add to the corresponding levels. In essence we treat the match as if it was discovered to be inconsistent and update the forest of still consistent matches accordingly. We then continue similarly with the next guess. 

The reason for the cutting is that each incorrect substring from a level $\ell-i$ tree corresponds to $i$ failed hashes with the caveat that for two such strings these hashes might overlap. Cutting and reclassifying the cut-off trees and leafs/substrings as above makes sure that any substring specified by a $j \in B_i$ corresponds to $i$ disjoint failed hashes. 

It remains to analyze the number of such $t$-witnesses and to show that checking all $t$-witnesses for $t = 6k$ suffices to indeed check all typical ways in which hashes fail. 

\begin{lemma}\label{lem:numberofwitnesses}
For any $t$ the number of $t$-witnesses is at most $\exp(t)$.
\end{lemma}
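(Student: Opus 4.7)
I will count $t$-witnesses by first fixing the sequence of ``extra'' counters $(b_i)$ and then counting, for each such sequence, the number of compatible tuples of subsets $(B_i)$. The goal is to show that each stage contributes only $\exp(O(t))$.

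First I would count the admissible sequences $(b_i)_{i=0}^{\min\{\ell-1,t-1\}}$. Since $b_0$ does not appear in the weighted constraint, it contributes a factor of at most $t+1$, while the tail $(b_i)_{i\geq 1}$ behaves like an integer partition: if $n=\sum_{i\geq 1} i\, b_i \leq t$, then $b_i$ is the multiplicity of part $i$ in a partition of $n$. The Hardy--Ramanujan asymptotic $p(n)\leq e^{O(\sqrt n)}$ together with a union bound over $n\leq t$ bounds the number of such sequences by $(t+1)\cdot e^{O(\sqrt t)}=e^{O(\sqrt t)}$.

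Next, for a fixed sequence $(b_i)$, set $k_i = \lfloor t\,2^{-i}\rfloor + b_i$. The number of subsets $B_i\subseteq\{1,\ldots,t\,2^i\}$ of size at most $k_i$ is at most $(k_i+1)\binom{t\,2^i}{k_i}$, and the crucial inequality is $\binom{n}{k}\leq (en/k)^k$, which gives exponent $k(1+\log(n/k))$ rather than the $k\log n$ that an ordered encoding would pay. Since $k_i\geq \lfloor t\,2^{-i}\rfloor$ whenever $k_i\geq 1$, the ratio $(t\,2^i)/k_i$ is at most $2^{2i}$, so
$$\sum_i \log\binom{t\,2^i}{\leq k_i} \;\leq\; \sum_i \log(k_i+1) \;+\; \sum_i k_i\bigl(1 + 2i\log 2\bigr).$$
The first sum is negligible: only $O(\log t + \sqrt t)$ indices have $k_i\geq 1$, because $\lfloor t\,2^{-i}\rfloor\geq 1$ only for $i\leq \log_2 t$, and among $i\geq 1$ with $b_i\geq 1$ one has $\sum i \leq \sum i\,b_i \leq t$, forcing at most $O(\sqrt t)$ such indices. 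For the second sum I use $\sum_i \lfloor t\,2^{-i}\rfloor \leq 2t$, $\sum_i b_i \leq 2t$ (from $b_0\leq t$ and $\sum_{i\geq 1}b_i\leq \sum_{i\geq 1}i\,b_i\leq t$), together with $\sum_i i\,\lfloor t\,2^{-i}\rfloor \leq t\sum_i i\,2^{-i}=O(t)$ and $\sum_i i\,b_i\leq t$, to conclude the second sum is $O(t)$.

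Combining the two stages yields $e^{O(\sqrt t)}\cdot e^{O(t)} = \exp(O(t))$ total $t$-witnesses, as claimed. The main subtlety is exactly the interplay in the last step: without the partition-style constraint $\sum_i i\,b_i\leq t$, the quantity $\sum_i i\,k_i$ could grow like $t\log t$ and blow the bound up to $\exp(O(t\log t))$. The point of defining a $t$-witness with this specific constraint is therefore precisely to keep the per-level ``depth cost'' $2i\log 2$ amortized against a quantity that is globally $O(t)$, rather than $O(t\log t)$.
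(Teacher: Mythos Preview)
Your proof is correct and follows essentially the same route as the paper: first count the $(b_i)$ configurations, then for each fixed configuration bound $\prod_i \binom{t\,2^i}{\leq k_i}$ via the entropy-style estimate $\binom{n}{k}\leq (en/k)^k$ with $n/k\approx 2^{2i}$, and finally sum the exponents using $\sum_i i\,2^{-i}=O(1)$ and $\sum_i i\,b_i\leq t$. The only cosmetic differences are that the paper counts the $(b_i)$ by stars-and-bars ($\binom{2t}{t}=\exp(t)$) rather than invoking Hardy--Ramanujan, and that you are more careful about $b_0$ and the ``size at most $k_i$'' prefactors---neither difference is material.
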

\begin{proof}
The condition $\sum_{0 \leq i} i b_i = t$ implies that the sum of all $b_i$ for a $t$-witness is at most $t$. The number of possibilities  of different choices for setting the $b$-values, i.e., for distributing these these $t$ ``tokens'' over at most $t$ levels, is at most $\binom{2t}{t} = \exp(t)$. Furthermore, the number of possibilities to pick $\floor{t 2^{-i}} + b_i$ integers smaller than $t 2^i$ is at most $\binom{t 2^i}{\floor{t 2^{-i}} + b_i} = \Theta(4^i)^{\Theta(t) 2^{-i} + b_i} \leq \exp(t i 2^{-i} + i b_i)$. The total number of different $t$-witnesses for a given setting of $b$ values is thus at most $\prod_i \exp(t i 2^{-i} + i b_i) = \exp(t \sum_j j 2^{-j}) \exp(\sum_i i b_i) = \exp(t)$.
\end{proof}

\medskip

\begin{lemma}\label{lem:witnessprobability}
For $t = 6k$, with probability $1 - \exp(-ot)$, the set of substrings in $M_{\ell}$ that are bad can be covered by a $t$-witness.
\end{lemma}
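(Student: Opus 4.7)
My plan is to establish a combinatorial coverability criterion, reduce non-coverability to the existence of a ``bad'' witness, and then apply a union bound leveraging \Cref{lem:numberofwitnesses}.

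Let $N_i$ denote the number of bad matches in $M_\ell$ stemming from level $\ell-i$, i.e., bad leaves at depth $i$ in the forest of still-consistent matches. By \Cref{lem:inductivecorrectnessofalgtwo} there are at most $5k \leq t$ trees rooted at each depth, so bad leaves at depth $i$ are indexable within $\{1,\ldots, t 2^i\}$. A direct optimization gives the coverability criterion: the bad set is coverable by a $t$-witness if and only if $N_i = 0$ for all $i \geq t$ and $\sum_{i < t} i \, (N_i - \lfloor t 2^{-i} \rfloor)^+ \leq t$. Indeed, any cover forces $b_i \geq (N_i - \lfloor t 2^{-i} \rfloor)^+$; conversely, taking $B_i$ to be the bad leaves themselves and $b_i := (N_i - \lfloor t 2^{-i} \rfloor)^+$ satisfies all witness constraints.

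Next, I bound each failure mode separately. A deep failure (a bad leaf at depth $i \geq t$) requires $i$ independent hash failures of probability $\approx 2^{-o}$; summing over the at most $5k \cdot 2^i$ candidate positions at each depth yields $P[\exists \text{ bad leaf at depth} \geq t] \leq \sum_{i \geq t} 5k \cdot 2^{(1-o)i} = O(k)\cdot 2^{-(o-1)t} \leq \exp(-ot)$ for sufficiently large constant $o$. For a shallow failure I reduce non-coverability to existence of a ``bad'' $t$-witness: when $\sum_{i < t} i \,(N_i - \lfloor t 2^{-i}\rfloor)^+ > t$, choose $b_i^* \leq (N_i - \lfloor t 2^{-i}\rfloor)^+$ greedily so that $\sum_i i\, b_i^* = t$ exactly (feasible by the hypothesis), then pick $|B_i^*| = \lfloor t 2^{-i}\rfloor + b_i^*$ actual bad leaves at depth $i$ (which exist: when $b_i^* > 0$, the choice of $b_i^*$ forces $N_i \geq \lfloor t 2^{-i}\rfloor + b_i^*$; when $b_i^* = 0$, take $\min(N_i, \lfloor t 2^{-i}\rfloor)$ bad leaves instead). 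The resulting $W^*$ is a valid $t$-witness whose every selection is an actually bad match, with edge count $e(W^*) := \sum_i i\, |B_i^*| \geq \sum_i i\, b_i^* = t$.

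Finally, for any fixed $t$-witness $W$ the cutting procedure in its definition guarantees that the $i$ hashes attributed to each $j \in B_i$ use pairwise disjoint portions of $R$. Combined with independence of $R$ across levels and the $\exp(-ok)$-bias within a level, this gives $P[W \text{ all bad}] \leq \exp(-o \cdot e(W))$ up to a negligible bias correction. A union bound over the at most $\exp(t)$ witnesses (\Cref{lem:numberofwitnesses}) yields
\begin{equation*}
P[\text{shallow bad set not coverable}] \;\leq\; \sum_{W:\,e(W)\geq t} \exp(-o\,e(W)) \;\leq\; \exp(t)\cdot\exp(-ot) \;=\; \exp(-(o-1)t),
\end{equation*}
which is $\leq \exp(-ot)$ after choosing the constant $o$ large enough to absorb the single $\exp(t)$ factor.

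The main obstacle is the witness-construction step: the greedy allocation of $(b_i^*)$ must produce $B_i^*$'s that index actual leaves under the witness's successive \emph{cutting} enumeration, which re-indexes leaves as selections are processed; one must argue that any valid choice of set sizes $(|B_i^*|)$ can indeed be realized by some concrete $(B_i^*)$ after the cutting. A secondary concern is the bias analysis, where the $\exp(-ok)$-bias of $R$ must dominate the $\exp(-o\,e(W))$ target probability; this is immediate for $e(W) \leq O(k)$ but pins down how large the constant $o$ must be chosen to absorb both the bias and the $\exp(t)$ union-bound overhead.
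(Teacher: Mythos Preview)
Your proposal is correct and follows essentially the same route as the paper: reduce non-coverability to the existence of a witness configuration with $\sum_i i\,b_i \geq t$, use the cutting procedure to guarantee disjoint hash failures so that each such configuration has probability $\exp(-ot)$, and finish with a union bound over the $\exp(t)$ witnesses from \Cref{lem:numberofwitnesses}. Your explicit separation into deep versus shallow failures and the greedy construction of $W^*$ are organizational refinements rather than a different argument; the paper folds the deep case into the same framework by allowing $I$ to contain large levels, and your flagged obstacle about realizing the $(B_i^*)$ under the successive cutting re-indexing is precisely the point the paper also treats only informally (via the phrase ``after cutting'').
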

\begin{proof}
Suppose the set $S$ of bad level $\ell$ matches in $M_{\ell}$ cannot be covered by a $t$-witness. This means that there exists as subset of levels $I \subseteq [\ell]$ such that for each $i \in I$ the number of bad matches in $M_{\ell}$ stemming from level $\ell - i$ (after cutting) is by $b_i > 0$ larger than $\floor{t 2^{-i}}$ where $\sum_{i\in I} i b_i \geq t$.

Following the argument from \Cref{lem:inductivecorrectnessofalgtwo} the number of incorrectly matched substrings in the last level is at most $5k$ with probability $1 - \exp(-ok) = 1 - \exp(-ot)$. In this case on can choose $b_0 = 0$ for any potential $t$-witness such that $I$ does not contain $0$.

Next we show that the probability for a given set of such bad matches to have survived up to iteration $\ell$ is at most $\exp(-ot)$.
In order for a specific substring that stems from level $\ell-i$ to fail the $i$ hashes including it in every level but the last one must have failed. The probability for this is $\exp(-io)$ for independent hashes, which is guaranteed through the cutting of overlapping hashes. Given that there are at most $t 2^i$ substrings that get processed in level $\ell-i$ the expected number of such substrings to be incorrect and not previously discovered in this level is $t\exp(-io)$ and the probability for $\floor{t 2^{-i}} + b_i$ such substrings to exist is at most $\exp(- i o b_i)$. The probability for a fixed set of matches as specified above to be a description of bad matches is thus at most $\prod_{i\in I} \exp(- i o b_i) = \exp(- o \sum_{i\in I} i b_i) = \exp(-ot)$.

According to \Cref{lem:numberofwitnesses} the number of $t$-witnesses is at most $\exp(t)$ and specifying a full witness gives rise to at least as many possibilities as just specifying the values for all $B_i$ with $i \in I$. Therefore A simple union bound over all such possibilities completes the proof that having a set of bad matches in $M_{\ell}$ which cannot be covered by a $t$-witness is at most $\exp(t) \cdot \exp(-ot) = \exp(-ot)$.  
\end{proof}

\medskip

Therefore, checking all $\exp(k)$ many $\Theta(k)$-witnesses suffices, with probability $1-\exp(-ok)$, to find (a superset of) the positions of the substrings which have a non-matching hash in this level. Even trying all $\exp(o)^{5k}$ possibilities for what the correct hash values are for each of these $\exp(k)$ guesses and verifying if it leads to a matching hash $h_i$ for the whole string of hashes works correctly because the probability that an incorrect guess has a matching hash $h_i$ is $\exp(-o'k)$, which even after a union bound over all $\exp(k)\exp(ok)$ guesses is negligible. 

Overall for any $k \in \Omega(\log \log n) \cap O(\log n)$ this makes Algorithm~\ref{alg:ImprovedRecovery} correct and efficient since each of the $O(\log n)$ levels succeeds with probability $1 - \exp(-ok) > 1 - \log^{O(1)} n$ and the number of guesses in each level one needs to try is at most $\exp(\log n) = n^{O(1)}$. In the case of $k < o(\log \log n)$ there are too many levels to simply do a union bound with the $\exp(-ok)$ failure probabilities for correctness and for larger $k$ the number of guesses needed makes the algorithm inefficient. These two problems are handled relatively easily as we show next. We first prove \Cref{thm:mainrandomized} for the case of small $k$, i.e., for $k = O(\log n)$:

\begin{proof}[Proof of \Cref{thm:mainrandomized} for $k = O(\log n)$.\ ]
We first note that for $o=O(1)$ the hashes in the summary $S_F$ are indeed of size $O(ok \log \frac{n}{k})$ as desired. Furthermore, given the constructions for $\eps$-biased distributions from \cite{naor1993small} one can specify the randomness for each level using $O(ok)$ bits or $O(ok \log \frac{n}{k})$ across all levels. We thus overall have a summary size of $O_o(k \log \frac{n}{k})$ as claimed. 

It furthermore follows almost immediately from \Cref{lem:inductivecorrectnessofalgtwo} that Algorithm~\ref{alg:ImprovedRecovery} is a successful decoding algorithm with probability $\log \frac{n}{k} \cdot \exp(-o k)$ as long as $R$ is chosen independently from an $\exp(-ok)$ biased distribution for each level. 

In particular, by induction on $\ell$, each iteration starts with a correct $H_F[\ell,.]$ and an okay matching $M_\ell$. This is true for $\ell=0$ because $H_F[0,.]$ is part of the summary of $F$ used as an input and $M_0$ is the empty $0$-level matching which consists of $4k$ unmatched hashes in $F$. For subsequent levels, we get from \Cref{lem:inductivecorrectnessofalgtwo} that $M_{\ell+1}$ is also okay. This then leads to a guess $\tilde{H}_F(\ell+1,.)$ for the level $\ell+1$ hashes which is correct up to $5k$ hashes. With probability $1 - \exp(-ok)$ these can be described by a $\Theta(k)$-witness by \Cref{lem:witnessprobability}. Guessing the correct hash values for this witness then leads to the correct hashes which is recognized by a matching of the hash $h'_i$. The probability that among the other $\exp(k)$ many $\Theta(k)$-witnesses each with $\exp(ok)$ guesses for their hash values there is an incorrect one which still matches $h'_i$ is $\exp(k) \exp(ok) \exp(-o'k) = \exp(-o'k) < \exp(-ok)$. In each level we thus recover the correct hashes $H_F(\ell+1,.)$ with probability $1-\exp(-ok)$. A union bound over all $\log \frac{n}{k}$ levels then leads to a failure probability of at most $\log \frac{n}{k} \cdot \exp(-o k)$. 

While this failure probability is $o(1)$ if $k = \omega(\log \log n)$ it is not quite as strong as the $\exp(k + \log n)$ failure probability claimed by \Cref{thm:mainrandomized} and furthermore becomes meaningless for even smaller $k$. Therefore, for $k < \frac{\log n}{\sqrt{o}}$ we modify Algorithm~\ref{alg:ImprovedRecovery} as follows: Enumerate over any subset $U$ of levels of size $|U| = O(\frac{\log n}{\sqrt{o}k})$ levels and instead of running Algorithm~\ref{alg:ImprovedRecovery} on this level try all $\exp(k)$ different $k$-plausible matchings as a possibility of $\Delta_\ell$. Given that there are at most $\binom{\log n}{|U|} = \exp(\log n \frac{\log ok}{\sqrt{o}k}) = n^{\frac{\log ok}{\sqrt{o}k}}$ many subsets of levels and at most $\exp(k|U|)=\exp(\frac{\log n}{\sqrt{o}})$ many matchings to try for all of these levels this requires at most $n^{\frac{\log o}{\sqrt{o}}}$ many different modified runs of Algorithm~\ref{alg:ImprovedRecovery} which is an essentially negligible overhead in the recovery time. Furthermore, the probability that none of these runs successfully recovers $F$ is at most the probability that of having $|U|$ failures in $\log \frac{n}{k}$ trials in which each trial succeeds independently with probability $\exp(-ok)$, and thus at most $\binom{\log n}{|U|} \exp(-ok)^|U| < 
n^{\frac{\log ok}{\sqrt{o}k}} \exp(-\log n \sqrt{o}) = n^{-\sqrt{o}}$. To see this we apply \Cref{lem:inductivecorrectnessofalgtwo} as before until the first iteration fails, which happens with independent probability of $\exp(-ok)$ for each iteration. We then look at the run in which the first failed iteration is the first iteration which the matching process is simply replaced by a ``guess'' for $\Delta_\ell$ and in particular when this guess is the correct matching. The following iterations in this run again fail independently with probability $\exp(-ok)$. If we continue to replace all failing iterations in this run we end up with a correct run, unless more than $|U|$ iterations fail independently. 

Over all these runs we get at most some $n^{\frac{\log o}{\sqrt{o}}}$ potential guesses for $F$. For each such guess we can check whether indeed a file $F$ was recovered with $ED(F',F) < k$. Adding an independent random hash of $F$ of size $\Theta(\log n)$ to the summary $S_F$ and checking whether it matches with what was recovered is sufficient to ensure that the algorithm, with high probability, only terminates and outputs a recovery once the correct $F$ is found. 
\end{proof}

\medskip

Next we proof \Cref{thm:mainrandomized} for the case of large $k = \omega(\log n)$. While here correctness and failure probabilities are not an issue, efficiency is. In particular trying all $\exp(k)$ guesses for $\Theta(t)$-witnesses becomes super-polynomial and thus intractable. The approach to avoid this is based on simple sampling. Instead of trying to guess all $\Theta(k)$ substrings with non-matching hashes and then verifying them via the hash $h'_i$ we instead randomly put all substrings in $k/\log n$ subsets. With high probability each subset contains at most $O(\log n)$ substrings that need correcting and the same arguments as before show that with high probability these can be specified with a $\Theta(\log n)$-witness. Trying the polynomially many such witnesses in each class and verifying them independently via separate hashes leads to a simple polynomial time computation which succeeds with high probability.

\begin{proof}[Proof of \Cref{thm:mainrandomized} for $k = \omega(\log n)$.\ ]
In addition to the $H$-hashes in the summary $S_F$ which have size $O(ok \log \frac{n}{k})$ as before some $\Theta(k) = \Omega(\log n)$ independent bits of randomness are added for each level which, using the $\eps$-biased distributions from \cite{naor1993small} are used to color each substring in each level with a uniformly random color between $0$ and $\ceil{\frac{k}{\log n}}$. For each level in $S_F$ there is furthermore a hash of size $O(o' \log n)$ bits added for each color class, hashing the string of hashes of the same color. Here $o' >> o$ is a sufficiently large constant. The summary size remains $O(k \log \frac{n}{k})$ bits as before.

The algorithm for recovery essentially also follows Algorithm~\ref{alg:ImprovedRecovery} except that in order to recover the correct new set of level $\ell+1$ hashes $H_F(\ell+1,.)$ in iteration $\ell$ we consider each color class separately. In particular the algorithm enumerates all $\Theta(\log n)$-witnesses with substrings in a single color class $c$ and all guesses for the correct hash values for them. \Cref{lem:numberofwitnesses} guarantees that there are only a polynomial number of such witnesses and thus a polynomial number of different guesses for the part of $H_F(\ell+1,.)$ colored with $c$. Which of these guesses is correct can be verified by the extra $O(\log n)$ size hash added to $S_F$ and this verification is correct with high probability. While it is clear that among the $5k$ bad substrings guaranteed by \Cref{lem:inductivecorrectnessofalgtwo} there will be at most $O(\log n)$ in each color class with high probability given that any such string ends up in a specific color class $c$ with probability $\frac{\log n}{k}$ it remains to be shown that these substrings are, with high probability, described by a $\Theta(\log n)$-witness. This follows in the same way as the proof of \Cref{lem:witnessprobability}: In order for a specific substring in level $\ell-i$ to be incorrect the $i$ hashes including it must have failed. The probability for this is $\exp(-io)$. The probability to get colored $c$ for a fixed $c$ is furthermore independent and $\Theta(\frac{\log n}{k})$ giving an overall probability of $\frac{\log n}{k} \exp(-io)$. Among the $O(k 2^i)$ substrings in level $\ell-i$ the expected number of substrings discovered to be incorrect in this level and be in color class $c$ is thus $\log n \exp(-io)$ and the probability for $|B_i| \leq \floor{\log n 2^{-i}} + b_i$ such substrings to exist is at most $\exp(- i o b_i)$. The probability for a fixed $O(\log n)$-witness to describe substrings with non-matching hashes is thus $\prod_i \exp(- i o b_i) = \exp(- o \sum_i i b_i) = \exp(-o \log n) = n^{-O(o)}$. For a sufficiently large $t = \Theta(\log n)$ no such $t$-witness which describes substrings with incorrect hashes exists which means that all the substrings whose hashes need correcting can be described by a $t'$-witness with $t' < t$ and are thus found and corrected by the recovery algorithm. Overall each iteration succeeds with high probability also giving the desired with high probability guarantee for the entire algorithm.
\end{proof}

\section{Derandomizing Algorithm 1 - The Deterministic Document Exchange Protocol of \Cref{thm:maindet} and New Error Correcting Codes}\label{sec:derandomization}

In this section we show how to derandomize Algorithm 1 and complete the proofs of \Cref{thm:maindet} and \Cref{thm:maincodes}. 

\subsection{Derandomizing Algorithm 1 and the Proof of \Cref{thm:maindet}}

As a first important step we show that one can significantly weaken the requirements Lemma~\ref{lem:okbiasednomatching} puts on the quality of the randomness provided. In particular, the conclusion of Lemma~\ref{lem:okbiasednomatching} holds with high probability even if we use a $(n^{-c})$-biased distribution. \\

\begin{lemma}\label{lem:polynbiasednomatching}
For every sufficiently large $c \geq 1$ it holds that if $o = c \log \frac{n}{k}$ and $R$ is sampled from an $n^{-2c}$-biased distribution of bits then for every level $i$ the probability that there exists a $k$-bad matching of $F$ under $\hash$ and $R$ is at most $n^{-\Omega(c)}$.
\end{lemma}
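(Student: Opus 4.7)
My plan is to extend the argument of \Cref{lem:okbiasednomatching} to accommodate the much weaker polynomial bias. The first step is to verify that for any fixed bad $k$-matching, the $ok$ linear $\F_2$-tests on $R$ induced by the hash equalities are linearly independent. Independence across the $o$ hash-bit coordinates is immediate since they use disjoint slices $R[\cdot,\ell,i]$ of the randomness table. Within a fixed coordinate, the $k$ per-pair tests are independent because source intervals are disjoint (a matching is assumed proper and disjoint) and each test's coefficient vector is non-zero ($S_j \neq S'_j$ since the matching is bad). Given linear independence, the standard Fourier bound for $\epsilon$-biased distributions yields $\Pr_R[\text{all }ok\text{ tests vanish}] \leq 2^{-ok} + n^{-2c}$ per matching.

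Second, I would union bound over the at most $\binom{n}{k}^2 \leq (en/k)^{2k}$ potential $k$-matchings, obtaining
\[
\Pr[\exists\ k\text{-bad matching}] \;\leq\; e^{2k}(n/k)^{(2-c)k} \;+\; (en/k)^{2k} \cdot n^{-2c}.
\]
Recalling that $o = c\log(n/k)$ and that $k \log(n/k) \geq \log n$ for all $1 \leq k \leq n/2$, the first term is $n^{-\Omega(c)}$ for $c$ sufficiently large, directly mirroring the clean counting in \Cref{lem:okbiasednomatching}.

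The principal obstacle is the second summand, where the $(en/k)^{2k}$ matching count can overwhelm the per-matching bias $n^{-2c}$ once $k$ grows beyond roughly $O(c \log n / \log(n/k))$. To close this gap I plan either to (i) replace the enumeration of all $k$-matchings by a smaller family of \emph{primitive} bad witnesses of cardinality $n^{O(c)}$, and then argue that the existence of any bad $k$-matching implies the existence of such a primitive witness, or (ii) sharpen the per-matching probability when $ok$ is large by exploiting the full dimension spanned by the $ok$ linearly independent tests to replace the naive additive $2^{-ok} + n^{-2c}$ bound by a tighter bound of the form $O\!\left(2^{-ok}\right)$ whenever $2^{-ok} \leq n^{-2c}$. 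The correct calibration of ``sufficiently large $c$'' will then be determined by balancing both contributions so that each is $n^{-\Omega(c)}$ uniformly over all $1 \leq k \leq n/2$, yielding the lemma.
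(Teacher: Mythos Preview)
Your diagnosis of the obstacle is exactly right: the additive bias term $n^{-2c}$ survives the union bound only if the number of candidate witnesses is $n^{O(c)}$, and the raw count $\binom{n}{k}^2$ is far too large once $k$ exceeds $\Theta(c\log n/\log(n/k))$. However, the proposal stops precisely at the point where the real work begins.

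Option~(ii) is a dead end. For any $\eps$-biased distribution the bound $\Pr[\,m\text{ independent linear tests vanish}\,]\leq 2^{-m}+\eps$ is tight in general: take the mixture that outputs the all-zeros string with probability $2\eps$ and is otherwise uniform. No amount of ``full dimension'' reasoning lets you beat the additive $\eps$, so you cannot replace $2^{-ok}+n^{-2c}$ by $O(2^{-ok})$ when $2^{-ok}\ll n^{-2c}$.

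Option~(i) is the correct route, but you have not said what the primitive witnesses are, and that is the entire content of the lemma. The paper supplies this via a pigeonhole reduction (its \Cref{lem:largetosmallmatching}): if $F$ has a $k$-bad self-matching at level $\ell$, then by chopping it into $\lfloor k/k'\rfloor$ consecutive $k'$-bad sub-matchings and taking the shortest one, $F$ has a $k'$-bad self-matching whose total span $(i_{k'}-i_1)+(i'_{k'}-i'_1)$ is at most $\tfrac{4k'}{k}n$, for any $k'\leq k/2$. Now choose $k'=\log n/\log(n/k)$ so that $ok'=c\log n$. The number of such span-restricted $k'$-matchings is at most $n^2\cdot\binom{4k'n/k}{2k'-2}\leq n^2\cdot (n/k)^{O(k')}=n^{O(1)}$, because the first pair $(i_1,i'_1)$ costs $n^2$ and the remaining $2k'-2$ indices lie in an interval of length $O(k'n/k)$. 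With only $n^{O(1)}$ witnesses and a per-witness collision probability of $2^{-ok'}+n^{-2c}=2n^{-c}$ (here the event depends on only $ok'=O(\log n)$ linear tests, so the $n^{-2c}$ bias is harmless), the union bound gives $n^{-\Omega(c)}$.

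In short, your plan is on the right track with option~(i), but the missing piece is the specific reduction from a $k$-bad matching to a \emph{short} $k'$-bad matching with $k'=\Theta(\log n/\log(n/k))$; without it the proof does not close.
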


In order to prove this we apply a trick very similar to the long-distance property in \cite{haeupler2017synchronization3} used to derandomized synchronization strings. In particular, we show that we can restrict ourselves to use much smaller (sub-)matchings as witnesses:\\

\begin{lemma}\label{lem:largetosmallmatching}
If $F$ has a $k$-bad self-matching under $\hashRo$ in level $\ell$ than for any $1 \leq k' \leq \frac{k}{2}$ it also has a $k'$-bad self-matching under $\hashRo$ in level $\ell$ with $(i_{k'} - i_1) + (i'_{k'} - i'_1) \leq \frac{4k'}{k}n$.
\end{lemma}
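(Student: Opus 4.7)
The plan is to prove Lemma~\ref{lem:largetosmallmatching} by a straightforward averaging/pigeonhole argument: any $k$-bad self-matching, being long and monotone in both coordinates, must contain a block of $k'$ consecutive pairs that is compressed in both coordinates simultaneously.

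First, I would start with a $k$-bad self-matching $(i_1,\ldots,i_k)$, $(i'_1,\ldots,i'_k)$ of $F$ under $\hashRo$ in level $\ell$. By our standing convention this matching is proper, monotone and disjoint, so $i_1 \leq \cdots \leq i_k$ and $i'_1 \leq \cdots \leq i'_k$, with all indices in $[1,n]$. Partition the pair indices $\{1,\ldots,k\}$ into $m := \lfloor k/k' \rfloor$ consecutive blocks $G_1,\ldots,G_m$ of length $k'$ each (discarding any leftover tail). For each block $G_t = \{j_t, j_t+1, \ldots, j_t+k'-1\}$, let $s_t := i_{j_t+k'-1} - i_{j_t}$ and $s'_t := i'_{j_t+k'-1} - i'_{j_t}$ be its spans in $F$ and in the copy of $F$.

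The second step is the averaging. Because the $i$-coordinates are monotone, the spans $s_t$ correspond to nested-free intervals in $[1,n]$ whose total length is at most $n$, so $\sum_{t=1}^{m} s_t \leq n$, and similarly $\sum_{t=1}^{m} s'_t \leq n$. Adding these gives $\sum_{t=1}^m (s_t + s'_t) \leq 2n$, so by averaging there exists some block $G_{t^*}$ with
\[
s_{t^*} + s'_{t^*} \;\leq\; \frac{2n}{m} \;=\; \frac{2n}{\lfloor k/k' \rfloor} \;\leq\; \frac{4k'}{k}\, n,
\]
where the last inequality uses $k' \leq k/2$, hence $\lfloor k/k'\rfloor \geq k/(2k')$.

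Finally, I would observe that the sub-matching indexed by $G_{t^*}$ inherits all desired properties from the original: it is a size-$k'$ monotone disjoint sub-matching (these properties are preserved under restriction to a consecutive sub-sequence), and it remains bad since every matched pair of substrings of the original matching consists of non-identical strings with colliding hashes. Re-indexing this sub-matching as $(i_1,\ldots,i_{k'})$, $(i'_1,\ldots,i'_{k'})$ yields $(i_{k'}-i_1)+(i'_{k'}-i'_1) = s_{t^*}+s'_{t^*} \leq \tfrac{4k'}{k}n$, which is exactly the conclusion. No real obstacle is anticipated; the only mild subtlety is the joint averaging over the sum of the two spans, which is what allows a single block to satisfy the compactness bound in both coordinates simultaneously rather than only in one.
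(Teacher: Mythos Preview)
Your proof is correct and essentially identical to the paper's own argument: partition the $k$-bad matching into $\lfloor k/k'\rfloor \geq k/(2k')$ consecutive blocks of $k'$ pairs, observe that the total combined span across all blocks is at most $2n$ by monotonicity, and pick the block of minimum combined span. The paper states this a bit more tersely, but the decomposition, the averaging step, and the use of $k' \leq k/2$ to get $\lfloor k/k'\rfloor \geq k/(2k')$ are exactly the same.
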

\begin{proof}
Let $i_1,\ldots,i_{k},i'_1,\ldots,i'_{k} \in [1,n]$ be the indices for the $k$-bad matching of $F$ under $\hashRo$ in level $\ell$. Decompose this $k$-bad matching into $\floor{\frac{k}{k'}} \geq \frac{k}{2k'}$ many $k'$-bad matchings where $i_{(j-1)k'+1},\dots,i_{jk'}$ and $i'_{(j-1)k'+1},\dots,i'_{jk'}$ are the indices for the $j^{th}$ such matching. Denote with $\ell_j = (i_{jk'} - i_{j-1)k'+1}) + (i'_{jk'} - i'_{(j-1)k'+1})$ the total length of the $j^{th}$ matching. Note that the total length of all these $k'$-bad matchings sums up to at most $2n$. Therefore the shortest such matching is a $k'$-bad matching in $F$ with the desired length bound of at most $\frac{2n}{\frac{k}{2k'}}$. 
\end{proof}

\medskip

We can now prove Lemma~\ref{lem:polynbiasednomatching} in a similar way as Lemma~\ref{lem:okbiasednomatching}.

\begin{proof}[Proof of Lemma~\ref{lem:polynbiasednomatching}]
According to Lemma~\ref{lem:largetosmallmatching} it suffices to prove that with high probability $F$ does not have a $k'$-bad matching with the ``length restriction'' $(i_{k'} - i_1) + (i'_{k'} - i'_1) \leq \frac{4k'}{k}n$ for some $k'$. We choose $k' = \frac{\log n}{\log \frac{n}{k}}$ such that $ok' = c \log n$.
We again start with the case that $R$ is sampled from iid uniformly random bits. The probability for any two sub-strings of $F$ to have a bad hash is still $2^{-o}$. Furthermore, for a fixed $k'$-bad matching the probability that all matching pairs are bad under $\hash$ is $2^{-ok'} = n^{-c}$. There furthermore exist at most $n^2$ ways to pick $i_{k'}$ and $i'_{k'}$ and at most $\binom{\frac{4k'}{k}n}{2k'-2} = \left(\frac{n}{k}\right)^{O(k')}$ ways to pick the other $2k'-2$ indices given the length restriction for a total of $n^2 \cdot 2^{k' \log \frac{n}{k}} = n^{O(1)}$ potentially $k'$-bad matchings satisfying the length restriction. Choosing the constant $c$ large enough and taking a union bound over all these potentially bad matchings guarantees that, with high probability, no such matching is $k'$ bad which due to Lemma~\ref{lem:largetosmallmatching} guarantees that $F$ does not have a $k$-bad matching. Given that our argument for one $k'$-bad matching (again, similar to Lemma~\ref{lem:okbiasednomatching}) under a $R$ sampled from iid uniformly random bits only depends on $O(\log n)$ linear test on bits from $R$ sampling $R$ instead from an $(n^{-c})$-biased distribution does not change the above argument. 
\end{proof}

\medskip

The improvement of Lemma~\ref{lem:polynbiasednomatching} to polynomially biased random bits is particularly useful because there are simple constructions of such spaces with a polynomial size support which can be efficiently explored~\cite{naor1993small}. Furthermore, our definition of a $k$-bad matching in $F$ has the advantage that it only depends on $R$ and $F$ (and not $F'$). This allows one to determine independently of $F'$ whether a certain $R$ is a good choice for the ``randomness'' of Algorithm~$1$ when run on $F$. Putting all this together we get an efficient deterministic hashing scheme with summary size of $O(k \log^2 \frac{n}{k})$:\\

\begin{proof}[Proof of Theorem~\ref{thm:maindet}]
Given $F$ and $k$ we set $n = |F|$ and $o = c \log \frac{n}{k}$ for some small but sufficiently large constant $c$. Take a construction of an $|R|=no\log \frac{n}{k}$ long $(n^{-c})$-biased bit vector with polynomial support~\cite{naor1993small}. One by one (or in parallel) set $R$ to be one of these bit vectors and test whether under $R$ there exists a $k$-bad matching in $F$ under $R$ using a standard dynamic program. Do this until one setting of $R$ is found for which no such matching exists. The existence of such an $R$ is guaranteed by Lemma~\ref{lem:polynbiasednomatching}. The summary creation algorithm then uses this $R$ and $\hash$ and $o$ to create $S_F$. It also adds the $O(\log n)$ bit description of $R$ to $S_F$. 
The recovery algorithm is now simply Algorithm~$1$. Furthermore, because there is, by construction, no $k$-bad matching Lemma~\ref{lem:algonecorrectnesswithoutkmatch} guarantees that Algorithm~$1$ indeed terminates correctly. 
\end{proof}

\medskip

\subsection{New Error Correcting Codes for Insertions and Deletions and the Proof of \Cref{thm:maincodes}}

Finally one can use the deterministic document exchange protocol from \Cref{thm:maindet} and transform it into an error correcting code for insdel errors. For sake of complete we give here a complete proof of this (folklore) transformation:

\begin{proof}[Proof of Theorem~\ref{thm:maincodes}]
To encode $X$ we run the deterministic document exchange scheme from \Cref{thm:maindet} for edit distance $2k$ to obtain the summary $S_X$ consisting of $\Theta(k \log^2 \frac{n}{k})$ bits which can be converted into $O(k \frac{\log^2 \frac{n}{k}}{\log q} + k)$ symbols from $\Sigma$. If this is less than $k$ symbols we pad it to be $\Theta(k)$ symbols long. Next we encode these $O(k \frac{\log^2 \frac{n}{k}}{\log q} + k)$ symbols with any efficient error correcting block code $E$ which protects against a constant fraction of insdel errors, which is at least $2k$ insdels. For this we can use \cite{schulman1999asymptotically,GL-isit16,HaeuplerSTOC17p46} or \cite{haeupler2017synchronization2}. This increases the size by at most a constant. Overall we use the $r = O(k \frac{\log^2 \frac{n}{k}}{\log q} + k)$ symbols of $E(S_X)$ as the non-systematic part of the encoding $C(X)$.

Recovery now is also simple. Given a corrupted codeword $C'$ we interpret the first $n$ symbols as a corrupted version $X'$ of $X$ and the last $r$ symbols as a corrupted version $E'$ of $E(S_X)$. It is clear if $ED(C',C(X)) \leq k$ then both $E'$ and $X'$ have edit distance at most $2k$ from $E(S_X)$ and $X$ respectively. This allows us to decode $E'$ to $S_X$ and then use the document exchange recovery algorithm to recover $X$ from $X'$ and $S_X$.
\end{proof}

\medskip

The insdel codes from \Cref{thm:maincodes} improve over the error correcting code by Brakensiek, Guruswami and Zbarsky~\cite{brakensiek2016efficient} with redundancy $\Theta(k^2 \log k \log n)$ which are efficient under the strong assumption that $k$ is a fixed constant independent of $n$ and codes from Belazzougui's derandomized document exchange scheme~\cite{belazzougui2015efficient} which have a  redundancy of $\Theta(k^2 + k \log n)$. For the case of $k=\eps n$ the near optimal redundancy of $\Theta(\eps \log^2 \frac{1}{\eps} \cdot n)$ is a quadratic improvement over the codes of Guruswami et al.~\cite{guruswami2015deletion,GL-isit16} and Haeupler, Shahrasbi and Vitercik~\cite{haeupler2017synchronization2,HaeuplerSTOC17p46} which have redundancy $\Theta(\sqrt{\eps}\left(\log \frac{1}{\eps}\right)^{O(1)} \cdot n)$. The work of Cheng et al.~\cite{docexchangefocs}, which obtained \Cref{thm:maindet} independently and simultaneously, developed the ideas behind the deterministic document exchange even further and obtain non-systematic insdel codes with redundancy $O(k \log n)$. This is asymptotically optimal for any $k < n^{1 - \eps}$. Due to their non-systematic nature these codes do not correspond to a deterministic document exchange protocol. It remains an interesting open question whether \Cref{thm:maindet} can be improved and whether an efficient deterministic document exchange with optimal summary size, matching \Cref{thm:mainrandomized}, is possible.

\section*{Acknowledgments}

The author thanks Alon Orlitsky and Venkat Guruswami for introducing him to this problem. The author also thanks the group from \cite{docexchangefocs} for pointing out an error in the preliminary draft of this paper.






\begin{thebibliography}{10}
\providecommand{\url}[1]{#1}
\csname url@samestyle\endcsname
\providecommand{\newblock}{\relax}
\providecommand{\bibinfo}[2]{#2}
\providecommand{\BIBentrySTDinterwordspacing}{\spaceskip=0pt\relax}
\providecommand{\BIBentryALTinterwordstretchfactor}{4}
\providecommand{\BIBentryALTinterwordspacing}{\spaceskip=\fontdimen2\font plus
\BIBentryALTinterwordstretchfactor\fontdimen3\font minus
  \fontdimen4\font\relax}
\providecommand{\BIBforeignlanguage}[2]{{%
\expandafter\ifx\csname l@#1\endcsname\relax
\typeout{** WARNING: IEEEtran.bst: No hyphenation pattern has been}%
\typeout{** loaded for the language `#1'. Using the pattern for}%
\typeout{** the default language instead.}%
\else
\language=\csname l@#1\endcsname
\fi
#2}}
\providecommand{\BIBdecl}{\relax}
\BIBdecl

\bibitem{orlitsky1991interactive}
A.~Orlitsky, ``Interactive communication: Balanced distributions, correlated
  files, and average-case complexity,'' in \emph{Proceedings of the IEEE
  Symposium on Foundations of Computer Science (FOCS)}, 1991, pp. 228--238.

\bibitem{belazzougui2016edit}
D.~Belazzougui and Q.~Zhang, ``Edit distance: Sketching, streaming, and
  document exchange,'' in \emph{Proceedings of the IEEE Symposium on
  Foundations of Computer Science (FOCS)}, 2016, pp. 51--60.

\bibitem{docexchangefocs}
K.~Cheng, Z.~Jin, X.~Li, and K.~Wu, ``Deterministic document exchange
  protocols, and almost optimal binary codes for edit errors,''
  \emph{Proceedings of the IEEE Symposium on Foundations of Computer Science
  (FOCS)}, 2018.

\bibitem{docexchangearxivhaeupler}
B.~Haeupler, ``Optimal document exchange and new codes for small number of
  insertions and deletions,'' \emph{ArXiv}, vol. abs/1804.03604, 2018.

\bibitem{docexchangearxivcheng}
K.~Cheng, Z.~Jin, X.~Li, and K.~Wu, ``Deterministic document exchange
  protocols, and almost optimal binary codes for edit errors,'' \emph{ArXiv},
  vol. abs/1804.05776, 2018.

\bibitem{belazzougui2015efficient}
D.~Belazzougui, ``Efficient deterministic single round document exchange for
  edit distance,'' \emph{ArXiv}, vol. abs/1511.09229, 2015.

\bibitem{irmak2005improved}
U.~Irmak, S.~Mihaylov, and T.~Suel, ``Improved single-round protocols for
  remote file synchronization,'' in \emph{Proceedings of the IEEE International
  Conference on Computer Communications (INFOCOM)}, vol.~3, 2005, pp.
  1665--1676.

\bibitem{tridgell1996rsync}
A.~Tridgell and P.~Mackerras, ``The rsync algorithm,'' The Australian National
  University, Tech. Rep., 1996.

\bibitem{dodis2008fuzzy}
Y.~Dodis, R.~Ostrovsky, L.~Reyzin, and A.~Smith, ``Fuzzy extractors: How to
  generate strong keys from biometrics and other noisy data,'' \emph{SIAM
  Journal on Computing (SICOMP)}, vol.~38, no.~1, pp. 97--139, 2008.

\bibitem{jowhari2012efficient}
H.~Jowhari, ``Efficient communication protocols for deciding edit distance,''
  in \emph{Proceedings of the European Symposium on Algorithms (ESA)}, 2012,
  pp. 648--658.

\bibitem{chakraborty2016streaming}
D.~Chakraborty, E.~Goldenberg, and M.~Kouck{\`y}, ``Streaming algorithms for
  embedding and computing edit distance in the low distance regime,'' in
  \emph{Proceedings of the ACM Symposium on Theory of Computing (STOC)}, 2016,
  pp. 712--725.

\bibitem{schulman1999asymptotically}
L.~J. Schulman and D.~Zuckerman, ``Asymptotically good codes correcting
  insertions, deletions, and transpositions,'' \emph{IEEE Transactions on
  Information Theory (TransInf)}, vol.~45, no.~7, pp. 2552--2557, 1999.

\bibitem{guruswami2015deletion}
V.~Guruswami and C.~Wang, ``Deletion codes in the high-noise and high-rate
  regimes,'' \emph{IEEE Transactions on Information Theory (TransInf)},
  vol.~63, no.~4, pp. 1961--1970, 2017.

\bibitem{GL-isit16}
V.~Guruswami and R.~Li, ``Efficiently decodable insertion/deletion codes for
  high-noise and high-rate regimes,'' \emph{Proceedings of IEEE International
  Symposium on Information Theory (ISIT)}, pp. 620--624, 2016.

\bibitem{HaeuplerSTOC17p46}
B.~Haeupler and A.~Shahrasbi, ``Synchronization strings: Codes for insertions
  and deletions approaching the singleton bound,'' \emph{Proceedings of the ACM
  Symposium on Theory of Computing (STOC)}, pp. 33--46, 2017.

\bibitem{haeupler2017synchronization2}
B.~Haeupler, A.~Shahrasbi, and E.~Vitercik, ``Synchronization strings: Channel
  simulations and interactive coding for insertions and deletions,''
  \emph{Proceedings of the International Conference on Automata, Languages, and
  Programming (ICALP)}, 2018.

\bibitem{haeupler2017synchronization3}
B.~Haeupler and A.~Shahrasbi, ``Synchronization strings: Explicit
  constructions, local decoding, and applications,'' in \emph{Proceedings of
  the ACM Symposium on Theory of Computing (STOC)}, 2018.

\bibitem{brakensiek2016efficient}
J.~Brakensiek, V.~Guruswami, and S.~Zbarsky, ``Efficient low-redundancy codes
  for correcting multiple deletions,'' \emph{Proceedings of the ACM-SIAM
  Symposium on Discrete Algorithms (SODA)}, pp. 1884--1892, 2016.

\bibitem{bukh2016improved}
B.~Bukh, V.~Guruswami, and J.~H{\aa}stad, ``An improved bound on the fraction
  of correctable deletions,'' \emph{IEEE Transactions on Information Theory
  (TransInf)}, vol.~63, no.~1, pp. 93--103, 2016.

\bibitem{sloane2002single}
N.~J. Sloane, ``On single-deletion-correcting codes,'' \emph{Codes and Designs,
  de Gruyter, Berlin}, pp. 273--291, 2002.

\bibitem{mitzenmacher2009survey}
M.~Mitzenmacher, ``A survey of results for deletion channels and related
  synchronization channels,'' \emph{Probability Surveys}, vol.~6, pp. 1--33,
  2009.

\bibitem{naor1993small}
J.~Naor and M.~Naor, ``Small-bias probability spaces: Efficient constructions
  and applications,'' \emph{SIAM Journal on Computing (SICOMP)}, vol.~22,
  no.~4, pp. 838--856, 1993.

\bibitem{ukkonen1985algorithms}
E.~Ukkonen, ``Algorithms for approximate string matching,'' \emph{Information
  and control}, vol.~64, no. 1-3, pp. 100--118, 1985.

\end{thebibliography}
\end{document}